\newtheorem{theorem}{Theorem}
\newtheorem{lemma}{Lemma}
\theoremstyle{definition}
\newtheorem{definition}{Definition}
\theoremstyle{remark}
\newtheorem{remark}{Remark}
\theoremstyle{definition}
\theoremstyle{definition}
\newcommand{\N}{\mathbb{N}}
\newcommand{\R}{\mathbb{R}}
\newcommand{\K}{\mathcal{K}}
\definecolor{darkblue}{RGB}{0,0,102}
\definecolor{lightblue}{RGB}{77,77,148}
\definecolor{gold}{RGB}{234, 170, 0}
\definecolor{metallic_gold}{RGB}{139, 111, 78}
\renewcommand{\cal}[1]{\mathcal{ #1 }}
\newcommand{\mb}[1]{\mathbf{ #1 }}
\newcommand{\bs}[1]{\boldsymbol{ #1 }}
\newcommand{\bb}[1]{\mathbb{ #1 }}
\DeclareMathOperator*{\argmin}{argmin}
\newcommand{\Z}{\bb{Z}}
\theoremstyle{definition}
\newtheorem{claim}{Claim}
\title{\LARGE \textbf{Sampled-Data Stabilization with Control Lyapunov Functions \\ via Quadratically Constrained Quadratic Programs}}
\author{Andrew J. Taylor$^{1}$, Victor D. Dorobantu$^{1}$, Yisong Yue, Paulo Tabuada, and Aaron D. Ames
% \thanks{V. Dorobantu and Y. Yue are supported via ... A. Taylor, and A. Ames are supported via DARPA award HR00111890035, and NSF awards 1923239 and 1924526.}
\thanks{$^1$Both authors contributed equally. A.J. Taylor, V.D. Dorobantu, Y. Yue, and A.D. Ames are with the Department of Computing and Mathematical Sciences, California Institute of Technology, Pasadena, CA 91125, USA, {\tt\small \{ajtaylor, vdoroban, yyue, ames\}@caltech.edu}. P. Tabuada is with the Department of Electrical Engineering, University of California at Los Angeles, Los Angeles, CA 90095, USA, {\tt \small tabuada@ucla.edu}.
}
}
\begin{document}

\maketitle
\thispagestyle{empty}
\begin{abstract}
Controller design for nonlinear systems with Control Lyapunov Function (CLF) based quadratic programs has recently been successfully applied to a diverse set of difficult control tasks. These existing formulations do not address the gap between design with continuous time models and the discrete time sampled implementation of the resulting controllers, often leading to poor performance on hardware platforms. We propose an approach to close this gap by synthesizing sampled-data counterparts to these CLF-based controllers, specified as quadratically constrained quadratic programs (QCQPs). Assuming feedback linearizability and stable zero-dynamics of a system's continuous time model, we derive practical stability guarantees for the resulting sampled-data system. We demonstrate improved performance of the proposed approach over continuous time counterparts in simulation. 
\end{abstract}

\section{Introduction}
\label{sec:intro}
Nonlinear control methods offer promising solutions to many modern engineering applications. However, theoretically sound controller designs often fail to achieve desired behaviors when deployed on real systems. Thus, it is critical to understand the discrepancies between theoretical design and practical implementation mathematically, and to design controllers that close these gaps. Specifically, we address the challenges in designing controllers with continuous time models and realizing them with discrete time sampling implementations. 

Feedback linearization is a powerful tool in nonlinear control design, enabling the algorithmic synthesis of controllers for a wide class of mechanical and electrical systems \cite{isidori2013nonlinear}. Moreover, feedback linearization provides a constructive method to find Control Lyapunov Functions (CLFs) \cite{freeman2008robust} for continuous time systems. This fact has been used to formulate stabilizing controllers through Quadratic Programs (QPs) \cite{ames2014rapidly, galloway2015torque}, seeing use in several applications such as robotics \cite{choi2020reinforcement} and autonomous vehicles \cite{xiao2019control}. Despite these successes, translating these controllers to hardware platforms often requires additional effort to overcome the degradation of performance and introduction of chatter caused by sample frequency limitations. 

We propose an extension of the preceding nonlinear controller designs to the \textit{sampled-data} setting \cite{monaco2007advanced}, in which control inputs are specified at discrete sample times and held constant between sequential sample times (referred to as a zero-order hold). The resulting evolution of such systems between sample times is described by discrete time models, for which exact representations can rarely be derived, motivating the synthesis of controllers with approximate discrete time models. The foundational work in \cite{nevsic1999sufficient, nesic2004framework} established a sampled-data framework for translating stability guarantees for an approximate discrete time model to the exact discrete time model. Resulting sampled-data synthesis methods \cite{nesic2001backstepping, nevsic2005lyapunov, grune2003optimization} specify controllers that often demonstrate improved performance over continuous time designs \cite{laila2003changing}, many using a simple Euler approximate discrete time model \cite{mareels1992controlling,nesic2001backstepping}, but optimization-based controllers synthesized using CLFs found via feedback linearization have not yet been considered.

%  with extensions to input-to-state \cite{nesic2002note} and integral input-to-state stability \cite{nesic2002integral}
The relationship between feedback linearizability and sampling has been thoroughly investigated \cite{grizzle1986feedback, monaco1986linearizing, grizzle1988feedback, arapostathis1989effect}. Much of this investigation has focused on whether feedback linearizability of a system's continuous time dynamics implies feedback linearizability of the exact discrete time model of the system (a fact that requires strict structure of the continuous time dynamics). Even with Euler approximate discrete time models, a continuous time feedback linearizable system must be first expressed in appropriate coordinates before sampling and approximating to ensure the approximate model is feedback linearizable in the discrete time sense \cite{grizzle1988feedback}. This requirement is also seen with higher-order approximate models obtained via Taylor expansion \cite{barbot1996sampled}. The work \cite{monaco1988zero, yuz2005sampled, nishi2009nonlinear} studies the zero-dynamics that arise due to sampling and higher-order approximations, but not the impact of sampling on the stability of existing continuous time zero-dynamics.

We make two main contributions in this work. First, we formally integrate feedback linearization and zero-dynamics with Euler approximate discrete time models for sampled-data systems via the results in \cite{nevsic1999sufficient}. In particular, we demonstrate that systems with feedback linearizable continuous time dynamics and locally exponentially stable zero-dynamics can be rendered practically-stable via a continuous time feedback linearizing controller when the inputs to the system are implemented with a zero order hold. The often local nature of stability of zero-dynamics requires modification of the global results in \cite{nevsic1999sufficient}. Second, we extend the preceding result to optimization-based controllers using CLFs \cite{galloway2015torque} synthesized via feedback linearization. In Section \ref{sec:sim} we propose a controller, specified via a convex, quadratically constrained quadratic program (QCQP), that replaces the standard affine constraint on the time derivative of the CLF with a quadratic constraint on the decrease of the CLF over a sample period (as approximated by the Euler discrete time model). We demonstrate the improved performance of this controller over continuous time CLF formulations with sample frequency limitations.

\section{Preliminaries}
\label{sec:background}
Throughout this work, we will consider the nonlinear control system governed by the differential equation:
\begin{equation}\label{eqn:nonlinear-dynamics}
    \dot{\mb{x}} = \mb{f}(\mb{x}) + \mb{g}(\mb{x})\mb{u},
\end{equation}
for state signal $\mb{x}$ and control input signal $\mb{u}$ taking values in $\R^n$ and $\R^m$, respectively, drift dynamics $\mb{f}: \R^n \to \R^n$, and actuation matrix function $\mb{g}: \R^n \to \R^{n \times m}$. Consider an open subset $\mathcal{Z} \subseteq \R^n \times \R^m$ and its projection onto the state space $\mathcal{X} \triangleq \pi_1(\mathcal{Z})\subseteq\R^n$. Assume there exists $T_{\mathrm{max}} \in \R_{++}$ such that for every state-input pair $(\mb{x}_0, \mb{u}_0) \in \mathcal{Z}$, there exists a unique solution $\bs{\varphi}:[0,T_{\mathrm{max}}]\to\R^n$ satisfying:
\begin{align}
\label{eqn:varphiode}
    \dot{\bs{\varphi}}(t) &= \mb{f}(\bs{\varphi}(t)) + \mb{g}(\bs{\varphi}(t))\mb{u}_0 \quad \forall t \in (0, T_{\max}),\\
    \bs{\varphi}(0) &= \mb{x}_0. \label{eqn:varphiic}
\end{align}
This enables the following reachable set definition:
\begin{equation*}
    \mathcal{D} \triangleq \big\{\mb{x}\in\R^n ~|~ \exists(\mb{x}_0,\mb{u}_0)\in\mathcal{Z},~ t\in[0,T_{\mathrm{max}}] ~\mathrm{s.t.}~ \mb{x} = \bs{\varphi}(t) \big\},
\end{equation*}
where $\mathcal{X}\subseteq\mathcal{D}$. Given an $h\in(0,T_{\mathrm{max}}]$, we define a controller $\mb{k}: \mathcal{X} \to \R^m$ as \textit{$h$-admissible} if for any state $\mb{x}_0 \in \mathcal{X}$, the state-input pair $(\mb{x}_0, \mb{k}(\mb{x}_0))$ satisfies $(\mb{x}_0, \mb{k}(\mb{x}_0)) \in \mathcal{Z}$ and the corresponding solution $\bs{\varphi}$ satisfies $\bs{\varphi}(t) \in \mathcal{X}$ for all $t\in[0,h]$. 
\begin{remark}
This requirement on $h$-admissible controllers ensures that in the sampled-data context, the evolution of the system may be described by iterative solutions to \eqref{eqn:varphiode}-\eqref{eqn:varphiic}. For many systems, the assumption that an $h$-admissible controller keeps the system's state in the set $\mathcal{X}$ is relatively weak as $\mathcal{X}$ is defined to ensure the continued existence of solutions rather than reflecting a task-specific set that must be kept invariant. In many cases verifying $h$-admissibility of a controller may be intractable, but the assumption of $h$-admissibility of a given controller is often weak in practice.
\end{remark}

Feedback linearization offers a tool for the synthesis of stabilizing controllers for nonlinear continuous time systems, and will serve an important role in constructing optimization-based controllers for sampled-data nonlinear systems. We make use of the following abbreviated definition, but more details may be found in \cite{isidori2013nonlinear}:

\begin{definition}[\textit{Feedback Linearizability}]
The system \eqref{eqn:nonlinear-dynamics} is \textit{feedback linearizable} if there exist dimensions $k, \gamma \in \N$ with $k \leq m$ and $\gamma \leq n$, an open set $\mathcal{E}\subseteq\R^n$ such that $\mathcal{D}\subseteq\mathcal{E}$, a transformation $\bs{\Phi}: \mathcal{E} \to \R^n$ that is a diffeomorphism between $\mathcal{E}$ and an open subset of $\R^n$, a controller $\mb{k}_{\mathrm{fbl}}: \mathcal{X} \times \R^k \to \R^m$, a controllable pair $(\mb{A}, \mb{B}) \in \R^{\gamma \times \gamma} \times \R^{\gamma \times k}$, and a function $\mb{q}: \bs{\Phi}(\mathcal{D}) \to \R^{n - \gamma}$ satisfying:
\begin{equation}
\label{eqn:fblindef}
    D\bs{\Phi}(\mb{x})(\mb{f}(\mb{x}) + \mb{g}(\mb{x})\mb{k}_{\mathrm{fbl}}(\mb{x}, \mb{v})) = \begin{bmatrix} \mb{A}\bs{\eta} + \mb{B}\mb{v} \\ \mb{q}(\bs{\xi}) \end{bmatrix},
\end{equation}
for all states $\mb{x} \in \mathcal{X}$ and auxiliary control inputs $\mb{v} \in \R^k$, where $\bs{\eta} \in \R^\gamma$, $\mb{z} \in \R^{n - \gamma}$, and $\bs{\xi} \in \R^n$ satisfy $(\bs{\eta}, \mb{z}) = \bs{\xi} =  \bs{\Phi}(\mb{x})$. Note that if $\gamma = n$, the system is full-state feedback linearizable, and the function $\mb{q}$ does not appear in \eqref{eqn:fblindef}. The corresponding system in \textit{normal form} is governed by:
\begin{align}
    \label{eqn:outputdyn}
    \dot{\bs{\xi}} \triangleq \begin{bmatrix}\dot{\bs{\eta}} \\ \dot{\mb{z}}\end{bmatrix} = \begin{bmatrix} \mb{f}_{\bs{\eta}}(\bs{\xi}) \\ 
    \mb{q}(\bs{\xi}) \end{bmatrix} + \begin{bmatrix} \mb{g}_{\bs{\eta}}(\bs{\xi}) \\ \mb{0}_{n - \gamma} \end{bmatrix} \mb{u} = \mb{f}_{\bs{\xi}}(\bs{\xi}) + \mb{g}_{\bs{\xi}}(\bs{\xi})\mb{u},
\end{align}
for \textit{normal state} signal $\bs{\xi}$, \textit{output} signal $\bs{\eta}$, \textit{zero-coordinate} signal $\mb{z}$, and control input signal $\mb{u}$, with $\mb{f}_{\bs{\eta}}: \bs{\Phi}(\mathcal{D}) \to \R^\gamma$ and $\mb{g}_{\bs{\eta}}: \bs{\Phi}(\mathcal{D}) \to \R^{\gamma \times m}$ defined such that $\mb{f}_{\bs{\xi}}: \bs{\Phi}(\mathcal{D}) \to \R^n$ and $\mb{g}_{\bs{\xi}}: \bs{\Phi}(\mathcal{D}) \to \R^{n \times m}$ satisfy:
\begin{equation}
    D\bs{\Phi}(\bs{\Phi}^{-1}(\bs{\xi}))(\mb{f}(\bs{\Phi}^{-1}(\bs{\xi})) + \mb{g}(\bs{\Phi}^{-1}(\bs{\xi}))\mb{u}) = \mb{f}_{\bs{\xi}}(\bs{\xi}) + \mb{g}_{\bs{\xi}}(\bs{\xi})\mb{u}, \nonumber
\end{equation}
for all $\bs{\xi} \in \bs{\Phi}(\mathcal{D})$ and $\mb{u} \in \R^m$.
\end{definition}

\begin{remark}
As shown in \cite{grizzle1988feedback}, feedback linearizability of a continuous time system does not guarantee feedback linearizability of the resulting sampled-data system, even when using approximate discrete time models. In particular, this property may be lost due to a change of coordinates. The preservation of this property motivates studying the evolution of the normal form system in the sampled-data context. 
\end{remark}

As we will consider the control design process for the normal form system \eqref{eqn:outputdyn}, it is useful to define the set:
\begin{equation}
    \mathcal{Z}_{\bs{\xi}} = \{(\bs{\xi},\mb{u}) \in \bs{\Phi}(\mathcal{X}) \times \R^m  ~|~ (\bs{\Phi}^{-1}(\bs{\xi}),\mb{u})\in\mathcal{Z}\},
\end{equation}
noting that for every state-input pair $(\bs{\xi}_0,\mb{u}_0)\in\mathcal{Z}_{\bs{\xi}}$, there exists a unique solution $\bs{\psi}:[0,T_{\mathrm{max}}]\to\R^n$ satisfying:
\begin{align}
\label{eqn:varpsiode}
    \dot{\bs{\psi}}(t) &= \mb{f}_{\bs{\xi}}(\bs{\psi}(t)) + \mb{g}_{\bs{\xi}}(\bs{\psi}(t))\mb{u}_0 \quad \forall t \in (0, T_{\max}),\\
    \bs{\psi}(0) &= \bs{\xi}_0. \label{eqn:varpsiic}
\end{align}
For $h \in (0, T_{\max}]$, a controller $\mb{k}:\bs{\Phi}(\mathcal{X})\to\R^m$ is an $h$-admissible controller if the corresponding controller $\mb{k}':\mathcal{X}\to\R^m$ given by $\mb{k}'(\mb{x}) = \mb{k}(\bs{\Phi}(\mb{x}))$ for all $\mb{x}\in\mathcal{X}$ is $h$-admissible. A controller $\mb{k}_{\mathrm{aux}}: \bs{\Phi}(\mathcal{X}) \to \R^m$ is an \textit{$h$-admissible auxiliary controller} if $\mb{k}: \bs{\Phi}(\mathcal{X}) \to \R^m$ given by: 
\begin{equation}\label{eqn:total-fbl}
    \mb{k}(\bs{\xi}) = \mb{k}_{\mathrm{fbl}}(\bs{\Phi}^{-1}(\bs{\xi}), \mb{k}_{\mathrm{aux}}(\bs{\xi})),
\end{equation}
is an $h$-admissible controller.

% Lastly, we define the following comparison functions used in studying the behavior of nonlinear systems. A continuous function $\alpha: [0, a) \to \R_+$, with $a \in \R_{++} \cup \{ \infty \}$, is \textit{class $\cal{K}$}, denoted $\alpha \in \cal{K}$, if it is monotonically (strictly) increasing and satisfies $\alpha(0) = 0$. If $a = \infty$ and $\lim_{r \to \infty} \alpha(r) = \infty$, then $\alpha$ is termed radially unbounded and \textit{class $\cal{K}_{\infty}$}. A continuous function $\beta: [0, a) \times \R_+ \to \R_+$, with $a > 0$, is \textit{class $\cal{KL}_\infty$}, denoted $\beta \in \cal{KL}_\infty$, if the function $r \mapsto \beta(r, s) \in \cal{K}_\infty$ for all $s \in \R_+$, and the function $s \mapsto \beta(r, s)$ is monotonically non-increasing with $\beta(r,s)\rightarrow 0$ as $s \rightarrow \infty$ for all $r \in [0, a)$.

\section{Sampled-Data Control}
\label{sec:sampdata}
% This section provides a review of sampled-data control via approximate models. In particular, we specialize the results of \cite{nevsic1999sufficient} to the problem setting in Section \ref{sec:background}.

This section provides a review of the sampled-data control setting, in which inputs are applied to the system with a zero-order hold. In this setting, the set of possible sample periods is given by $I = (0, T_{\mathrm{max}}]$. Given a sample period $h\in I$ and an $h$-admissible controller $\mb{k}:\bs{\Phi}(\mathcal{X})\to\R^m$, the normal state and control input signals in \eqref{eqn:outputdyn} satisfy:
\begin{equation}
    \mb{u}(t) = \mb{k}(\bs{\xi}(t_k)) \quad \forall t\in[t_k,t_{k+1}),
\end{equation}
with sample times satisfying $t_{k+1}-t_k = h$ for all $k \in \bb{Z}_+$. In this setting, the evolution of the system over a sample period is given by the \textit{exact state discrete map} $\mb{F}_h^{e,\mb{x}}: \mathcal{Z} \to \mathcal{D}$ and \textit{exact normal discrete map} $\mb{F}_h^{e,\bs{\xi}}: \mathcal{Z}_{\bs{\xi}} \to \bs{\Phi}(\mathcal{D})$, defined as:
\begin{align}
    \mb{F}^{e,\mb{x}}_h(\mb{x}_0,\mb{u}_0) &= \mb{x}_0 + \int_{0}^{h} [\mb{f}(\bs{\varphi}(\tau)) + \mb{g}(\bs{\varphi}(\tau))\mb{u}_0]~ \mathrm{d}\tau, \\
    \mb{F}^{e, \bs{\xi}}_h(\bs{\xi}_0,\mb{u}_0) &= \bs{\xi}_0 + \int_{0}^{h} [\mb{f}_{\bs{\xi}}(\bs{\psi}(\tau)) + \mb{g}_{\bs{\xi}}(\bs{\psi}(\tau))\mb{u}_0]~ \mathrm{d}\tau,
\end{align}
for all state-input pairs $(\mb{x}_0, \mb{u}_0) \in \mathcal{Z}$ and all normal state-input pairs $(\bs{\xi}_0, \mb{u}_0) \in \mathcal{Z}_{\bs{\xi}}$. The exact maps are related by:
\begin{equation}
    \mb{F}_h^{e, \bs{\xi}}(\bs{\xi}_0, \mb{u}_0) = \bs{\Phi}(\mb{F}_h^{e, \mb{x}}(\bs{\Phi}^{-1}(\bs{\xi}_0), \mb{u}_0)),
\end{equation}
for all normal state-input pairs $(\bs{\xi}_0, \mb{u}_0) \in \mathcal{Z}_{\bs{\xi}}$.

\begin{remark}
While an equivalence between the exact state discrete map and exact normal discrete map is achieved via the diffeomorphism $\bs{\Phi}$, it is useful to define both maps as the notion of stability we consider for sampled-data systems is defined for a particular exact map.
\end{remark}

We call a family of controllers $\{ \mb{k}_h: \bs{\Phi}(\mathcal{X}) \to \R^m ~|~ h \in I \}$ a \textit{family of admissible controllers} if there is an $h^* \in I$ such that for each $h\in (0, h^*)$, $\mb{k}_h$ is $h$-admissible. This enables the following definition:

\begin{definition}[\textit{Exact Families}]
For a family of admissible controllers $\{ \mb{k}_h: \bs{\Phi}(\mathcal{X}) \to \R^m ~|~ h \in I \}$, we define the \textit{exact state family} $\{ (\mb{k}_h\circ\bs{\Phi}, \mb{F}_h^{e,\mb{x}}) ~|~ h \in I \}$ and \textit{exact normal family} $\{ (\mb{k}_h, \mb{F}_h^{e,\bs{\xi}}) ~|~ h \in I \}$ of controller-map pairs.
\end{definition}
For all $h\in I$ such that $\mb{k}_h$ is $h$-admissible, the recursion $\bs{\xi}_{k + 1} = \mb{F}^{e,\bs{\xi}}_h(\bs{\xi}_k, \mb{k}_h(\bs{\xi}_k))\in\bs{\Phi}(\mathcal{X})$ is well-defined for all $\bs{\xi}_0\in\bs{\Phi}(\mathcal{X})$ and $k\in\mathbb{Z}_+$. In practice, closed-form expressions for these maps are rarely obtainable, suggesting the use of approximations in the control synthesis process. While there are many approaches to approximating this map, we will use the following approximation of the exact normal discrete map:

\begin{definition}[\textit{Euler Approximation Family}]
For every sample period $h \in I$, define the map $\mb{F}_h^{a,\bs{\xi}}: \mathcal{Z}_{\bs{\xi}} \to \R^n$ as:
\begin{equation}
    \mb{F}_h^{a,\bs{\xi}}(\bs{\xi}_0, \mb{u}_0) = \bs{\xi}_0 + h(\mb{f}_{\bs{\xi}}(\bs{\xi}_0) + \mb{g}_{\bs{\xi}}(\bs{\xi}_0)\mb{u}_0),
\end{equation}
for all $(\bs{\xi}_0, \mb{u}_0) \in \mathcal{Z}_{\bs{\xi}}$. For a family of admissible controllers $\{ \mb{k}_h: \bs{\Phi}(\mathcal{X}) \to \R^m ~|~ h \in I \}$, the corresponding \textit{Euler approximation family} of controller-map pairs is:
\begin{equation}
    \{ (\mb{k}_h, \mb{F}_h^{a,\bs{\xi}}) ~|~ h \in I \}.
\end{equation}
\end{definition}
The motivation behind this particular approximation is preserving the strict feedback nature \cite{nevsic2006stabilization} of the normal form. For $h \in I$, we can also define $\mb{F}_h^{a, \bs{\eta}}: \bs{\mathcal{Z}}_{\bs{\xi}} \to \R^\gamma$ and $\mb{F}_h^{a, \mb{z}}: \bs{\Phi}(\mathcal{X}) \to \R^{n - \gamma}$ such that for all $(\bs{\xi}, \mb{u}) = ((\bs{\eta}, \mb{z}), \mb{u}) \in \mathcal{Z}_{\bs{\xi}}$:
\begin{equation}
    \mb{F}_h^{a, \bs{\xi}}(\bs{\xi}, \mb{u}) =  \begin{bmatrix} \mb{F}_h^{a, \bs{\eta}}(\bs{\xi}, \mb{u}) \\ \mb{F}_h^{a, \mb{z}}(\bs{\xi}) \end{bmatrix} = \begin{bmatrix} \bs{\eta} + h(\mb{f}_{\bs{\eta}}(\bs{\xi}) + \mb{g}_{\bs{\eta}}(\bs{\xi})\mb{u}) \\ \mb{z} + h\mb{q}(\bs{\xi}) \end{bmatrix}.\nonumber
\end{equation}

\begin{remark}
Under the current assumptions, there may be an $h\in I$ such that the controller $\mb{k}_h$ is $h$-admissible but the recursion $\bs{\xi}_{k + 1} = \mb{F}^{a,\bs{\xi}}_h(\bs{\xi}_k, \mb{k}_h(\bs{\xi}_k))$ is not well-defined for all $\bs{\xi}_0\in\bs{\Phi}(\mathcal{X})$ and $k\in\mathbb{Z}_+$. This is due to the definition of this map enabling $\bs{\xi}_k\notin \bs{\Phi}(\mathcal{X})$ for some $k>0$. While our results do not need the recursion of the Euler approximation family be well-defined, this can be achieved by extending the domains of $\mb{f}_{\bs{\xi}}$, $\mb{g}_{\bs{\xi}}$, and $\mb{k}_h$ to $\R^n$. 
\end{remark}

Defining class $\mathcal{K}$ ($\mathcal{K}_\infty$) and $\mathcal{KL}$ ($\mathcal{KL}_\infty$) comparison functions as in \cite{freeman2008robust, nevsic1999sufficient}, the following definition characterizes how accurately an approximate map captures the exact map:

\begin{definition}[\textit{One-Step Consistency}]
 A family $\{ (\mb{k}_h,\mb{F}_h): h \in I \}$ is \textit{one-step consistent} with $\{(\mb{k}_h, \mb{F}_h^{e,\bs{\xi}}) ~|~ h \in I \}$ if, for each compact set $K\subseteq\bs{\Phi}(\mathcal{X})$, there exist a function $\rho\in\K_\infty$ and $h^*\in I$ such that for all $\bs{\xi}\in K$ and $h\in(0,h^*)$, we have:
 \begin{equation}\label{eqn:one-step-cons}
     \Vert \mb{F}_h^{e,\bs{\xi}}(\bs{\xi},\mb{k}_h(\bs{\xi}))-\mb{F}_h(\bs{\xi},\mb{k}_h(\bs{\xi})) \Vert \leq h\rho(h).
 \end{equation}
\end{definition}

The following lemma (slightly modified from Lemma 1 in \cite{nevsic1999sufficient}, with proof in the appendix) relates the Euler approximation family and one-step consistency: 

\begin{lemma}
\label{lem:locliponestep}
Suppose $\mb{f}_{\bs{\xi}}$ and $\mb{g}_{\bs{\xi}}$ are locally Lipschitz continuous on $\bs{\Phi}(\mathcal{X})$. Consider a family of admissible controllers $\{ \mb{k}_h: \bs{\Phi}(\mathcal{X}) \to \R^m ~|~ h \in I \}$ and suppose that for any compact set $K\subset\bs{\Phi}(\mathcal{X})$ there exist $h^* \in I$ and a bound $M \in \R_{++}$ such that for every sample time $h \in (0,h^*)$, the controller $\mb{k}_h$ is bounded by $M$ on $K$. Then the family $\{ (\mb{k}_h, \mb{F}_h^{a,\bs{\xi}}) ~|~ h \in I \}$ is one-step consistent with the family $\{ (\mb{k}_h, \mb{F}_h^{e,\bs{\xi}}) ~|~ h \in I \}$.
\end{lemma}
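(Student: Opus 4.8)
The plan is to fix an arbitrary compact set $K \subseteq \bs{\Phi}(\mathcal{X})$, invoke the hypothesis to obtain $h^* \in I$ and $M \in \R_{++}$ with $\Vert \mb{k}_h(\bs{\xi})\Vert \le M$ for all $\bs{\xi}\in K$ and $h\in(0,h^*)$, and then produce a single compact set that contains every exact trajectory emanating from $K$ over one sufficiently short sample period. Concretely, since $\mathcal{X} = \pi_1(\mathcal{Z})$ is open and $\bs{\Phi}$ is a diffeomorphism, $\bs{\Phi}(\mathcal{X})$ is open, so compactness of $K$ yields a $\delta > 0$ for which the closed $\delta$-neighborhood $K_\delta$ of $K$ is a compact subset of $\bs{\Phi}(\mathcal{X}) \subseteq \bs{\Phi}(\mathcal{D})$. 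On $K_\delta$ both $\mb{f}_{\bs{\xi}}$ and $\mb{g}_{\bs{\xi}}$ are continuous, hence bounded, so $C \triangleq \sup_{\bs{\xi}\in K_\delta}\big(\Vert\mb{f}_{\bs{\xi}}(\bs{\xi})\Vert + M\Vert\mb{g}_{\bs{\xi}}(\bs{\xi})\Vert\big) < \infty$, and $\mb{f}_{\bs{\xi}}, \mb{g}_{\bs{\xi}}$ admit a common Lipschitz constant $L$ on $K_\delta$ by the local Lipschitz assumption together with compactness.

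The first substantive step is a continuation argument: for $h < h^{**} \triangleq \min\{h^*, \delta/C\}$, any $\bs{\xi}_0 \in K$, and $\mb{u}_0 = \mb{k}_h(\bs{\xi}_0)$, the solution $\bs{\psi}$ of \eqref{eqn:varpsiode}--\eqref{eqn:varpsiic} stays in $K_\delta$ on $[0,h]$. Indeed, $\bs{\psi}$ exists on $[0,h]$ and remains in $\bs{\Phi}(\mathcal{X})$ by $h$-admissibility of $\mb{k}_h$ (which is what makes $\mb{f}_{\bs{\xi}},\mb{g}_{\bs{\xi}}$ defined along it in the first place), and for as long as $\bs{\psi}(\tau)\in K_\delta$ we have $\Vert\dot{\bs{\psi}}(\tau)\Vert \le C$; since reaching the boundary of $K_\delta$ from $\bs{\xi}_0\in K$ requires traveling a distance at least $\delta$, this cannot happen in time $h < \delta/C$. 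This simultaneously gives the a priori bound $\Vert\bs{\psi}(\tau) - \bs{\xi}_0\Vert \le C\tau$ for all $\tau\in[0,h]$.

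The remainder is the standard Euler-versus-exact comparison, now localized to $K_\delta$. Writing
\begin{equation*}
\mb{F}_h^{e,\bs{\xi}}(\bs{\xi}_0,\mb{u}_0) - \mb{F}_h^{a,\bs{\xi}}(\bs{\xi}_0,\mb{u}_0) = \int_0^h \big[(\mb{f}_{\bs{\xi}}(\bs{\psi}(\tau)) - \mb{f}_{\bs{\xi}}(\bs{\xi}_0)) + (\mb{g}_{\bs{\xi}}(\bs{\psi}(\tau)) - \mb{g}_{\bs{\xi}}(\bs{\xi}_0))\mb{u}_0\big]\,\mathrm{d}\tau,
\end{equation*}
bounding the integrand by $L(1+M)\Vert\bs{\psi}(\tau)-\bs{\xi}_0\Vert \le L(1+M)C\tau$, and integrating gives $\Vert\mb{F}_h^{e,\bs{\xi}}(\bs{\xi}_0,\mb{u}_0) - \mb{F}_h^{a,\bs{\xi}}(\bs{\xi}_0,\mb{u}_0)\Vert \le \tfrac{1}{2}L(1+M)C\,h^2 = h\rho(h)$ with $\rho(s) = \tfrac{1}{2}L(1+M)C\,s \in \mathcal{K}_\infty$. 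Taking $h^{**}$ as the $h^*$ of the one-step consistency definition establishes \eqref{eqn:one-step-cons}.

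I expect the only delicate point to be the confinement argument of the second paragraph: one must combine $h$-admissibility (which keeps the true solution inside $\bs{\Phi}(\mathcal{X})$, where the normal-form vector field is defined) with a short-time velocity estimate to obtain a single compact set $K_\delta$ valid uniformly over all initial conditions $\bs{\xi}_0 \in K$. Once that is in place, everything reduces to the compact-set version of Lemma~1 in \cite{nevsic1999sufficient}, i.e., a routine Lipschitz/Gronwall-free estimate.
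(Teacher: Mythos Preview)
Your proposal is correct and follows essentially the same route as the paper: enlarge $K$ to a compact $\delta$-neighborhood inside $\bs{\Phi}(\mathcal{X})$, bound the closed-loop velocity there, take $h^{**}=\min\{h^*,\delta/C\}$ to confine the exact flow, and then apply the Lipschitz bound on $\mb{f}_{\bs{\xi}},\mb{g}_{\bs{\xi}}$ to obtain an $O(h^2)$ Euler error. The only difference is cosmetic: the paper states the Lipschitz estimate and then defers the confinement-and-integration step to Lemma~1 of \cite{nevsic1999sufficient} with the substitutions $\mathscr{X}=N(K,\epsilon)$ and $T_1^*=\min\{h^*,\epsilon/M'\}$, whereas you write those details out explicitly.
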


\noindent We note that if $\mb{f}$, $\mb{g}$, and $D\bs{\Phi}$ are locally Lipschitz continuous on $\mathcal{X}$, the first condition of Lemma \ref{lem:locliponestep} is met. We consider the following stability property, defined for both the exact state discrete map and the exact normal discrete map:

\begin{definition}[\textit{Practical Stability}]\label{def:prac-stab}
Let $\beta\in\cal{K}\cal{L}_{\infty}$ and $N \subseteq \R^n$ be an open set containing the origin. A family $\{ (\mb{k}_h,\mb{F}_h): h \in I \}$ is $(\beta,N)$-practically stable if for each $R\in\R_{++}$, there exists an $h^*\in I$ such that for each sample period $h \in (0, h^*)$, initial state $\bs{\zeta}_0 \in N$, and number of steps $k \in \Z_+$, the recursion $\bs{\zeta}_{k + 1} = \mb{F}_h(\bs{\zeta}_k, \mb{k}_h(\bs{\zeta}_k))$ is well-defined and:
\begin{equation}
    \Vert \bs{\zeta}_k \Vert \leq \beta(\Vert \bs{\zeta}_0 \Vert, kh)+R.
\end{equation}
\end{definition}

The following lemma relates the practical stability of the exact normal family and the exact state family. Importantly, it justifies considering the sampled normal form dynamics, which can be feedback linearized, rather than the sampled state dynamics that may not be feedback linearizable.

\begin{lemma}
\label{lem:diffeopracstability}
Suppose that $\mb{0}_n\in\mathcal{X}$ and $\bs{\Phi}(\mb{0}_n)=\mb{0}_n$, and that for any compact sets $K,K'\subset\R^n$, $\bs{\Phi}$ and $\bs{\Phi}^{-1}$ are globally Lipschitz continuous on $K'\cap\mathcal{X}$ and $K\cap\bs{\Phi}(\mathcal{X})$, respectively. If the exact normal family $\{ (\mb{k}_h, \mb{F}_h^{e,\bs{\xi}}) ~|~ h \in I \}$ is $(\beta,N)$-practically stable, then there exist $\beta'\in\mathcal{KL}_\infty$ and a bounded open set $N'\subseteq\R^n$ with $\mb{0}_n\in N'$ such that the exact state family $\{ (\mb{k}_h\circ\bs{\Phi}, \mb{F}_h^{e,\mb{x}}) ~|~ h \in I \}$ is $(\beta',N')$-practically stable.
\end{lemma}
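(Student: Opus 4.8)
The plan is to exploit the conjugacy $\mb{F}_h^{e,\bs{\xi}}(\bs{\xi}_0,\mb{u}_0)=\bs{\Phi}(\mb{F}_h^{e,\mb{x}}(\bs{\Phi}^{-1}(\bs{\xi}_0),\mb{u}_0))$ recorded in the excerpt. Given any state recursion $\mb{x}_{k+1}=\mb{F}_h^{e,\mb{x}}(\mb{x}_k,\mb{k}_h(\bs{\Phi}(\mb{x}_k)))$ with $\mb{x}_0\in\R^n$, the sequence $\bs{\xi}_k\triangleq\bs{\Phi}(\mb{x}_k)$ satisfies, by this conjugacy together with the $h$-admissibility of $\mb{k}_h$ (which by definition makes $\mb{k}_h\circ\bs{\Phi}$ $h$-admissible, keeps each $\mb{x}_k$ in $\mathcal{X}$, hence each $\bs{\xi}_k$ in $\bs{\Phi}(\mathcal{X})$ and each $(\bs{\xi}_k,\mb{k}_h(\bs{\xi}_k))$ in $\mathcal{Z}_{\bs{\xi}}$), exactly the normal-form recursion $\bs{\xi}_{k+1}=\mb{F}_h^{e,\bs{\xi}}(\bs{\xi}_k,\mb{k}_h(\bs{\xi}_k))$ with $\bs{\xi}_0=\bs{\Phi}(\mb{x}_0)$. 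So the whole argument reduces to (i) choosing $N'$ so that $\bs{\Phi}(N')\subseteq N$ and starting points land in the regime covered by the normal-family guarantee, and (ii) pushing the $\mathcal{KL}_\infty$ bound on $\bs{\xi}_k$ through $\bs{\Phi}^{-1}$ and the bound on $\bs{\xi}_0$ through $\bs{\Phi}$, using the hypothesized Lipschitz continuity of $\bs{\Phi}$ and $\bs{\Phi}^{-1}$ on compact sets.

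Concretely, I would first fix $N'$: since $\mathcal{X}$ is open (the projection of the open set $\mathcal{Z}$), $\mb{0}_n\in\mathcal{X}$, $\bs{\Phi}$ is continuous with $\bs{\Phi}(\mb{0}_n)=\mb{0}_n$, and $N$ is an open neighborhood of the origin, there is a bounded open ball $N'\ni\mb{0}_n$ with $\overline{N'}\subseteq\mathcal{X}$ and $\bs{\Phi}(N')\subseteq N$. Set $\bar r\triangleq\max_{\bs{\xi}\in\bs{\Phi}(\overline{N'})}\Vert\bs{\xi}\Vert<\infty$ (finite since $\bs{\Phi}(\overline{N'})$ is compact) and $\bar R\triangleq\beta(\bar r,0)+1$. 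Let $L'$ be a Lipschitz constant for $\bs{\Phi}$ on the compact set $\overline{N'}=\overline{N'}\cap\mathcal{X}$, and $L$ a Lipschitz constant for $\bs{\Phi}^{-1}$ on the compact set $\overline{B(\mb{0}_n,\bar R)}\cap\bs{\Phi}(\mathcal{X})$ (both exist by hypothesis), and define $\beta'(s,t)\triangleq L\,\beta(L's,t)$, which lies in $\mathcal{KL}_\infty$. Then, given $R\in\R_{++}$, I would apply the $(\beta,N)$-practical stability of the exact normal family with the free constant of Definition~\ref{def:prac-stab} set to $\min\{1,R/L\}$, obtaining $h^*\in I$ (shrunk if necessary below the admissibility threshold of $\{\mb{k}_h\}$) such that for all $h\in(0,h^*)$, $\bs{\xi}_0\in N$, $k\in\Z_+$: $\Vert\bs{\xi}_k\Vert\le\beta(\Vert\bs{\xi}_0\Vert,kh)+\min\{1,R/L\}$. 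For $\mb{x}_0\in N'$ we have $\bs{\xi}_0=\bs{\Phi}(\mb{x}_0)\in\bs{\Phi}(N')\subseteq N$ with $\Vert\bs{\xi}_0\Vert\le\bar r$, so $\Vert\bs{\xi}_k\Vert\le\beta(\bar r,0)+1=\bar R$ for every $k$; hence every $\bs{\xi}_k$ lies in $\overline{B(\mb{0}_n,\bar R)}\cap\bs{\Phi}(\mathcal{X})$. Using $\bs{\Phi}^{-1}(\mb{0}_n)=\mb{0}_n$ and $\bs{\Phi}(\mb{0}_n)=\mb{0}_n$,
\begin{equation*}
 \Vert\mb{x}_k\Vert=\Vert\bs{\Phi}^{-1}(\bs{\xi}_k)-\bs{\Phi}^{-1}(\mb{0}_n)\Vert\le L\Vert\bs{\xi}_k\Vert\le L\beta(L'\Vert\mb{x}_0\Vert,kh)+L\min\{1,R/L\}\le\beta'(\Vert\mb{x}_0\Vert,kh)+R,
\end{equation*}
which is the claimed estimate; well-definedness of the state recursion on $N'$ follows from $h$-admissibility of $\mb{k}_h\circ\bs{\Phi}$ by induction.

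The main obstacle, and the one point requiring care, is the apparent circularity: the residual $\min\{1,R/L\}$ by which the normal-form bound is tightened involves $L$, while $L$ is a Lipschitz constant over a ball of radius $\bar R$ that might seem to depend on how far the trajectory travels, hence on the residual. This is resolved by bounding the transient uniformly: as long as the residual is at most $1$, $\Vert\bs{\xi}_k\Vert\le\beta(\bar r,0)+1$, so $\bar R$ — and therefore $L$ — depends only on the fixed set $N'$ and the given $\beta$, not on $R$. The remaining steps (openness of $\mathcal{X}$, compactness of $\bs{\Phi}(\overline{N'})$, membership of the iterates in $\mathcal{X}$ and $\bs{\Phi}(\mathcal{X})$, and that $\beta'\in\mathcal{KL}_\infty$ as a composition of the relevant comparison-function classes with positive scalings) are routine once $N'$ and the two Lipschitz constants are in place.
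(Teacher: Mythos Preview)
Your proof is correct and follows essentially the same approach as the paper's: both choose a bounded open $N'$ with $\overline{N'}\subseteq\mathcal{X}$ and $\bs{\Phi}(N')\subseteq N$, fix Lipschitz constants for $\bs{\Phi}$ on $\overline{N'}$ and for $\bs{\Phi}^{-1}$ on a closed ball of radius $\beta(\bar r,0)+\overline{R}$ in $\bs{\Phi}(\mathcal{X})$ (you take $\overline{R}=1$, the paper leaves it generic), then invoke the normal-family practical stability with residual $\min\{\overline{R},R/L_{\bs{\Phi}^{-1}}\}$ and push the estimate through the two Lipschitz bounds to obtain $\beta'(s,t)=L_{\bs{\Phi}^{-1}}\beta(L_{\bs{\Phi}}s,t)$. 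Your explicit treatment of the apparent circularity between $L$ and the residual is a nice addition; the paper handles it the same way but without comment.
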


A proof is provided in the appendix. The following class of Lyapunov functions is useful in certifying practical stability:

\begin{definition}[\textit{Asymptotic Stability by Equi-Lipschitz Lyapunov Functions}]\label{def:as-by-equi-lip-lyap}
Consider a family of admissible controllers $\{ \mb{k}_h: \bs{\Phi}(\mathcal{X}) \to \R^m ~|~ h \in I \}$. A family $\{ (\mb{k}_h, \mb{F}_h) ~|~ h \in I \}$ is \textit{asymptotically stable by equi-Lipschitz Lyapunov functions} if for some open set $N \subseteq \bs{\Phi}(\mathcal{X})$ containing the origin and any compact set $K \subseteq N$, there exist $h^* \in I$, comparison functions $\alpha_1, \alpha_2 \in \mathcal{K}_\infty$ and $\alpha_3 \in \mathcal{K}$, a family $\{ V_h: \R^n \to \R_+ ~|~ h \in (0, h^*) \}$, and a Lipschitz constant $M \in \R_{++}$ such that:
\begin{align}
\label{eqn:xibounds}
    &\alpha_1(\| \bs{\xi}_1 \|) \leq V_h(\bs{\xi}_1) \leq \alpha_2(\| \bs{\xi}_1 \|),\\ \label{eqn:decrease-bound}
    &V_h(\mb{F}_h(\bs{\xi}_2, \mb{k}_h(\bs{\xi}_2))) - V_h(\bs{\xi}_2) \leq -h\alpha_3(\| \bs{\xi}_2 \|),\\ \label{eqn:lipschitzV}
    &\vert V_h(\bs{\xi}_3) - V_h(\bs{\xi}_4) \vert \leq M \Vert \bs{\xi}_3 - \bs{\xi}_4 \Vert,
\end{align}
for all $\bs{\xi}_1 \in \R^n$, normal states $\bs{\xi}_2 \in N$ and $\bs{\xi}_3, \bs{\xi}_4 \in K$, and sample times $h \in (0, h^*)$.
\end{definition}

These functions serve to connect one-step consistency and practical stability, given by the following result (a local variant of Theorem 2 in \cite{nevsic1999sufficient}, with proof in the appendix): 

% \begin{theorem}[\cite{nevsic1999sufficient}, Theorem 2]
% Consider a family of admissible controllers $\{ \mb{k}_h: \bs{\Phi}(\mathcal{X}) \to \R^m ~|~ h \in I \}$. If the corresponding Euler approximation family $\{ (\mb{k}_h, \mb{F}_h^{\mathrm{Euler}}) ~|~ h \in I \}$ is equi-globally asymptotically stable (EGAS) by equi-Lipschitz Lyapunov functions, then there exists $\beta \in \mathcal{KL}$ such that for any bounded open set $N \subseteq \bs{\Phi}(\mathcal{X})$ with $\mb{0}_n \in N$, the family $\{ (\mb{k}_h, \mb{F}_h^{e,\bs{\xi}}) ~|~ h \in I \}$ is $(\beta, N)$-practically stable.
% \end{theorem}

\begin{lemma}
\label{lem:exactpracstab}
Consider a family of admissible controllers $\{ \mb{k}_h: \bs{\Phi}(\mathcal{X}) \to \R^m ~|~ h \in I \}$. If the corresponding Euler approximation family $\{ (\mb{k}_h, \mb{F}_h^{a,\bs{\xi}}) ~|~ h \in I \}$ is asymptotically stable by equi-Lipschitz Lyapunov functions, then there exist $\beta \in \mathcal{KL}_\infty$ and a bounded open set $U \subseteq \bs{\Phi}(\mathcal{X})$ with $\mb{0}_n \in U$ such that the family $\{ (\mb{k}_h, \mb{F}_h^{e,\bs{\xi}}) ~|~ h \in I \}$ is $(\beta, N)$-practically stable for any open set $N \subseteq U$ with $\mb{0}_n\in N$.
\end{lemma}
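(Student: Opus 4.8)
The plan is to turn the Lyapunov decrease along the Euler family into a decrease along the \emph{exact} family by means of one-step consistency, and then extract a $\mathcal{KL}_\infty$ estimate by a comparison argument --- but to run this entire argument on a sublevel set of the $V_h$'s that is shown to be forward invariant under the exact recursion, so that the estimates, which are valid only on a compact set, never break down. Concretely, I would fix the open set $N_0\subseteq\bs{\Phi}(\mathcal{X})$ furnished by Definition~\ref{def:as-by-equi-lip-lyap}, pick $r>0$ with $\bar{B}_{3r}\subseteq N_0$, take the compact set $K=\bar{B}_{3r}$, and extract the associated data $h_0\in I$, $\alpha_1,\alpha_2\in\mathcal{K}_\infty$, $\alpha_3\in\mathcal{K}$, the family $\{V_h\}$, and the Lipschitz constant $M$. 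The first ingredient is one-step consistency of the Euler family with the exact family on $K$ --- which in the feedback-linearizable setting of interest follows from Lemma~\ref{lem:locliponestep} (its hypotheses being local Lipschitzness of $\mb{f}_{\bs{\xi}},\mb{g}_{\bs{\xi}}$ and local uniform boundedness of $\{\mb{k}_h\}$) --- giving $\rho\in\mathcal{K}_\infty$ and $h_1$ with $\|\mb{F}_h^{e,\bs{\xi}}(\bs{\xi},\mb{k}_h(\bs{\xi}))-\mb{F}_h^{a,\bs{\xi}}(\bs{\xi},\mb{k}_h(\bs{\xi}))\|\le h\rho(h)$ on $K$ for $h<h_1$. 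Combining this with \eqref{eqn:decrease-bound} via the Lipschitz bound \eqref{eqn:lipschitzV} yields, whenever $\bs{\xi}$ and its Euler and exact successors all lie in $K$,
\begin{equation*}
    V_h\!\left(\mb{F}_h^{e,\bs{\xi}}(\bs{\xi},\mb{k}_h(\bs{\xi}))\right)-V_h(\bs{\xi})\le Mh\rho(h)-h\alpha_3(\|\bs{\xi}\|).
\end{equation*}

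For the invariant region, I would choose $r'\in(0,r]$ small enough that $\alpha_1^{-1}(\alpha_2(r'))\le r$, set $U=B_{r'}$ (bounded, open, $\mb{0}_n\in U\subseteq N_0\subseteq\bs{\Phi}(\mathcal{X})$), and for each $h$ consider the sublevel set $\Omega_h=\{\bs{\xi}:V_h(\bs{\xi})\le\alpha_2(r')\}$, which by \eqref{eqn:xibounds} satisfies $B_{r'}\subseteq\Omega_h\subseteq\bar{B}_r\subseteq\operatorname{int}K$. Since on $\bar{B}_r$ the Euler increment $h(\mb{f}_{\bs{\xi}}+\mb{g}_{\bs{\xi}}\mb{k}_h)$ is $O(h)$, for $h$ small both the Euler and the exact successor of any point of $\bar{B}_r$ remain in $K=\bar{B}_{3r}$, so the displayed inequality applies at every point of $\Omega_h$. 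Setting $\delta(h)=\alpha_3^{-1}(2M\rho(h))$ (well defined and $\to0$ as $h\to0^+$) and $\bar\mu(h)=\alpha_2(\delta(h))+Mh\rho(h)\to0$, a short two-case check --- $\|\bs{\xi}\|\ge\delta(h)$, where the $-h\alpha_3$ term beats $Mh\rho(h)$ and $V_h$ does not increase, versus $\|\bs{\xi}\|<\delta(h)$, where $V_h$ increases by at most $Mh\rho(h)$ from a value below $\alpha_2(\delta(h))$ --- shows that for $h$ below some $h_2$ the sets $\Omega_h$ and $\{V_h\le\bar\mu(h)\}$ are forward invariant under the exact recursion. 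Since $\Omega_h\subseteq\bs{\Phi}(\mathcal{X})$, $h$-admissibility makes each exact step well defined, and invariance keeps the iterates in $\Omega_h$; thus $\bs{\zeta}_{k+1}=\mb{F}_h^{e,\bs{\xi}}(\bs{\zeta}_k,\mb{k}_h(\bs{\zeta}_k))$ is well defined for all $k\in\Z_+$ and all $\bs{\zeta}_0\in N$, for any open $N$ with $\mb{0}_n\in N\subseteq U$.

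It remains to get the $\mathcal{KL}_\infty$ bound. Writing $v_k=V_h(\bs{\zeta}_k)$, the displayed inequality together with \eqref{eqn:xibounds} gives $v_{k+1}\le v_k-\tfrac{h}{2}\alpha_3(\alpha_2^{-1}(v_k))$ whenever $v_k>\bar\mu(h)$, i.e.\ a perturbed explicit-Euler iteration for $\dot v=-\tilde\alpha(v)$ with $\tilde\alpha=\tfrac12\alpha_3\circ\alpha_2^{-1}\in\mathcal{K}$; a standard discrete-time comparison argument (as in \cite{nevsic1999sufficient}) produces $\sigma\in\mathcal{KL}$, depending only on $\alpha_2,\alpha_3$ and the bound $\alpha_2(r')$, with $v_k\le\sigma(v_0,kh)$ until $v$ first reaches $\bar\mu(h)$ and $v_k\le\bar\mu(h)$ thereafter by invariance, hence $v_k\le\sigma(v_0,kh)+\bar\mu(h)$ for all $k$. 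Passing back through \eqref{eqn:xibounds} and the weak triangle inequality $\alpha_1^{-1}(a+b)\le\alpha_1^{-1}(2a)+\alpha_1^{-1}(2b)$ gives $\|\bs{\zeta}_k\|\le\beta(\|\bs{\zeta}_0\|,kh)+\alpha_1^{-1}(2\bar\mu(h))$ with $\beta(s,t)=\alpha_1^{-1}(2\sigma(\alpha_2(s),t))$, which I would extend to lie in $\mathcal{KL}_\infty$ (legitimate since only $s\in[0,r']$ occurs). Given $R\in\R_{++}$, taking $h^*$ below $h_0,h_1,h_2$ and small enough that $\alpha_1^{-1}(2\bar\mu(h))\le R$ for $h<h^*$ delivers $\|\bs{\zeta}_k\|\le\beta(\|\bs{\zeta}_0\|,kh)+R$, i.e.\ $(\beta,N)$-practical stability of the exact normal family for every open $N$ with $\mb{0}_n\in N\subseteq U$.

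The main obstacle is the second step. Unlike the global Theorem~2 of \cite{nevsic1999sufficient}, here the Lyapunov decrease, the Lipschitz bound, and one-step consistency hold only on a compact set, so one cannot simply follow trajectories; the delicate point is sizing $U\subseteq\Omega_h\subseteq K$ and the shrinking thresholds $\delta(h),\bar\mu(h)$ so that both the Euler and the exact successor of every point of $\Omega_h$ stay inside $K$ and $\Omega_h$ is genuinely forward invariant under the exact map for \emph{all} steps, uniformly for small $h$; everything else is bookkeeping with comparison functions.
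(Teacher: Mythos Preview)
Your proposal is correct and follows essentially the same route as the paper's proof: localize to a compact set inside the open set furnished by Definition~\ref{def:as-by-equi-lip-lyap}, combine one-step consistency with the equi-Lipschitz Lyapunov decrease to obtain a perturbed decrease for $V_h$ along the exact map, trap trajectories in a forward-invariant sublevel set sized via $\alpha_1,\alpha_2$, and extract the $\mathcal{KL}_\infty$ bound by a discrete comparison argument. The paper's version is terser---it packages the perturbed-decrease step as a localized variant of the Claim in \cite[eq.~37]{nevsic1999sufficient} (with explicit restrictions on the level $d$ and radius $D$ so that the argument stays inside $K$) and then defers the invariance and comparison steps to the proof of Theorem~2 there---whereas you spell out the two-case invariance check and the comparison bound directly, but the substance and the choice of $U$ as a small ball determined by $\alpha_1,\alpha_2$ are the same.
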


\section{Stabilization}
\label{sec:results}
In this section we present our main results establishing feedback linearization as a method for practically stabilizing sampled-data nonlinear systems. The first result builds on \cite{tabuada2020data} to make a claim on the stabilizability of the output dynamics:

\begin{lemma}
\label{lem:outputstab}
Given a feedback linearizable system satisfying \eqref{eqn:fblindef}, consider $\mb{K} \in \R^{k \times \gamma}$ such that $\mb{A}_{\mathrm{cl}} \triangleq \mb{A} - \mb{B}\mb{K}$ is Hurwitz. Let $\mb{P}_{\bs{\eta}}\in\bb{S}^\gamma_{++}$ solve the continuous time Lyapunov Equation:
\begin{equation}
    \mb{A}_{\mathrm{cl}}^\top\mb{P}_{\bs{\eta}} + \mb{P}_{\bs{\eta}}\mb{A}_{\mathrm{cl}} = -\mb{Q}_{\bs{\eta}},
\end{equation}
for some $\mb{Q}_{\bs{\eta}}\in\bb{S}^\gamma_{++}$. Define the function $V_{\bs{\eta}}:\R^\gamma\to\R_{+}$ as $V_{\bs{\eta}}(\bs{\eta}) = \bs{\eta}^\top\mb{P}_{\bs{\eta}}\bs{\eta}$ for all $\bs{\eta} \in \R^\gamma$. For any $c \in (0, 1)$, there exists $h_{\bs{\eta}}^* \in I$ such that for any $\bs{\eta}_0 \in \R^\gamma$, $\bs{\xi} = (\bs{\eta}, \mb{z}) \in \bs{\Phi}(\mathcal{X})$, and $h \in (0, h_{\bs{\eta}}^*)$, there exists an input $\mb{u}\in\R^m$ such that:
\begin{align}
\label{eqn:etabounds}
    & \lambda_{\min}(\mb{P}_{\bs{\eta}})\Vert\bs{\eta}_0\Vert^2_2 \leq V_{\bs{\eta}}(\bs{\eta}_0)  \leq  \lambda_{\max}(\mb{P}_{\bs{\eta}})\Vert\bs{\eta}_0\Vert^2_2, \\\label{eqn:decrease-upper-bound} & V_{\bs{\eta}}(\mb{F}_h^{a, \bs{\eta}}(\bs{\xi}, \mb{u}))-V_{\bs{\eta}}(\bs{\eta}) \leq -hc\lambda_{\min}(\mb{Q}_{\bs{\eta}})\Vert\bs{\eta}\Vert_2^2. 
\end{align}
\end{lemma}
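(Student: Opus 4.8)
The plan is to reduce \eqref{eqn:decrease-upper-bound} to a classical fact about the forward-Euler discretization of a linear system. The bounds \eqref{eqn:etabounds} need no argument: since $\mb{P}_{\bs{\eta}} \in \bb{S}^\gamma_{++}$, they are precisely the Rayleigh quotient inequalities $\lambda_{\min}(\mb{P}_{\bs{\eta}})\Vert\bs{\eta}_0\Vert_2^2 \le \bs{\eta}_0^\top \mb{P}_{\bs{\eta}}\bs{\eta}_0 \le \lambda_{\max}(\mb{P}_{\bs{\eta}})\Vert\bs{\eta}_0\Vert_2^2$, valid for every $\bs{\eta}_0 \in \R^\gamma$.

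For the decrease condition, the first step is to exhibit the required input. Given $\bs{\xi} = (\bs{\eta},\mb{z}) \in \bs{\Phi}(\mathcal{X})$, I set $\mb{x} = \bs{\Phi}^{-1}(\bs{\xi}) \in \mathcal{X}$, choose the auxiliary input $\mb{v} = -\mb{K}\bs{\eta} \in \R^k$, and take $\mb{u} = \mb{k}_{\mathrm{fbl}}(\mb{x},\mb{v}) \in \R^m$. Comparing \eqref{eqn:fblindef} evaluated at $(\mb{x},\mb{v})$ with the identity defining the normal-form vector fields $\mb{f}_{\bs{\xi}}$, $\mb{g}_{\bs{\xi}}$ evaluated at $(\bs{\xi},\mb{u})$ and reading off the top $\gamma$ rows gives $\mb{f}_{\bs{\eta}}(\bs{\xi}) + \mb{g}_{\bs{\eta}}(\bs{\xi})\mb{u} = \mb{A}\bs{\eta} + \mb{B}\mb{v} = \mb{A}_{\mathrm{cl}}\bs{\eta}$. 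Hence the Euler update on the output block collapses to the linear recursion $\mb{F}_h^{a,\bs{\eta}}(\bs{\xi},\mb{u}) = \bs{\eta} + h(\mb{f}_{\bs{\eta}}(\bs{\xi}) + \mb{g}_{\bs{\eta}}(\bs{\xi})\mb{u}) = (\mb{I}_\gamma + h\mb{A}_{\mathrm{cl}})\bs{\eta}$.

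The second step is to evaluate $V_{\bs{\eta}}$ along this update: expanding the quadratic form and substituting the Lyapunov equation $\mb{A}_{\mathrm{cl}}^\top\mb{P}_{\bs{\eta}} + \mb{P}_{\bs{\eta}}\mb{A}_{\mathrm{cl}} = -\mb{Q}_{\bs{\eta}}$ gives
\begin{align*}
V_{\bs{\eta}}(\mb{F}_h^{a,\bs{\eta}}(\bs{\xi},\mb{u})) - V_{\bs{\eta}}(\bs{\eta}) &= -h\,\bs{\eta}^\top\mb{Q}_{\bs{\eta}}\bs{\eta} + h^2\,\bs{\eta}^\top\mb{A}_{\mathrm{cl}}^\top\mb{P}_{\bs{\eta}}\mb{A}_{\mathrm{cl}}\bs{\eta} \\
&\le -h\big(\lambda_{\min}(\mb{Q}_{\bs{\eta}}) - h\,\sigma\big)\Vert\bs{\eta}\Vert_2^2,
\end{align*}
where $\sigma \triangleq \lambda_{\max}(\mb{A}_{\mathrm{cl}}^\top\mb{P}_{\bs{\eta}}\mb{A}_{\mathrm{cl}}) > 0$ (strict positivity holds since a Hurwitz $\mb{A}_{\mathrm{cl}}$ is invertible and $\mb{P}_{\bs{\eta}}$ is positive definite). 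I would then set $h_{\bs{\eta}}^* \triangleq \min\big\{T_{\mathrm{max}},\, (1-c)\lambda_{\min}(\mb{Q}_{\bs{\eta}})/\sigma\big\} \in I$; for every $h \in (0, h_{\bs{\eta}}^*)$ this forces $\lambda_{\min}(\mb{Q}_{\bs{\eta}}) - h\sigma > c\,\lambda_{\min}(\mb{Q}_{\bs{\eta}})$, which gives \eqref{eqn:decrease-upper-bound}.

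This argument is routine linear algebra, so I do not expect a genuine obstacle; the two points that need care are (i) that the input $\mb{u}$ is well-defined via \eqref{eqn:fblindef}, which is legitimate because $\bs{\xi} \in \bs{\Phi}(\mathcal{X})$ guarantees $\bs{\Phi}^{-1}(\bs{\xi}) \in \mathcal{X}$, and (ii) that the bound $h_{\bs{\eta}}^*$ is \emph{uniform} in $\bs{\xi}$ and $\bs{\eta}_0$ — which holds precisely because $\sigma$ depends only on $\mb{A}_{\mathrm{cl}}$ and $\mb{P}_{\bs{\eta}}$, not on the state.
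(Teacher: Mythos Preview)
Your proof is correct and follows essentially the same route as the paper: choose the feedback-linearizing input with auxiliary control $-\mb{K}\bs{\eta}$ so that the Euler output update becomes $(\mb{I}_\gamma + h\mb{A}_{\mathrm{cl}})\bs{\eta}$, expand the quadratic form, and pick $h_{\bs{\eta}}^*$ via the ratio $(1-c)\lambda_{\min}(\mb{Q}_{\bs{\eta}})/\lambda_{\max}(\mb{A}_{\mathrm{cl}}^\top\mb{P}_{\bs{\eta}}\mb{A}_{\mathrm{cl}})$. Your explicit $\min$ with $T_{\mathrm{max}}$ to force $h_{\bs{\eta}}^* \in I$ and the remark on uniformity in $\bs{\xi}$ are minor clarifications but do not change the argument.
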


\begin{proof}
The bounds in \eqref{eqn:etabounds} follow from the definition of $V_{\bs{\eta}}$. Define the auxiliary controller $\mb{k}_{\mathrm{aux}}((\bs{\eta}, \mb{z})) = - \mb{K}\bs{\eta}$ for all $(\bs{\eta},\mb{z})\in\bs{\Phi}(\mathcal{X})$. For the controller $\mb{k}$ defined in \eqref{eqn:total-fbl}, we have:
\begin{align}
    & V_{\bs{\eta}}(\mb{F}_h^{a, \bs{\eta}}(\bs{\xi}, \mb{k}(\bs{\xi}))) - V_{\bs{\eta}}(\bs{\eta}) = V_{\bs{\eta}}((\mb{I}_\gamma + h\mb{A}_{\mathrm{cl}})\bs{\eta}) - V_{\bs{\eta}}(\bs{\eta}) \nonumber\\ &~= -h\bs{\eta}^\top(\mb{A}_{\mathrm{cl}}^\top\mb{P}_{\bs{\eta}}+\mb{P}_{\bs{\eta}}\mb{A}_{\mathrm{cl}} + h  \mb{A}_{\mathrm{cl}}^\top \mb{P}_{\bs{\eta}}\mb{A}_{\mathrm{cl}})\bs{\eta} \nonumber \\ & ~\leq -h(\lambda_{\min}(\mb{Q}_{\bs{\eta}}) - h \lambda_{\max}(\mb{A}_{\mathrm{cl}}^\top \mb{P}_{\bs{\eta}}\mb{A}_{\mathrm{cl}})) \Vert \bs{\eta} \Vert_2^2, \nonumber
\end{align}
for all $\bs{\xi} = (\bs{\eta}, \mb{z}) \in \bs{\Phi}(\mathcal{X})$ and $h \in I$. Picking $h^*_{\bs{\eta}} \in I$ with:
\begin{equation}
    h^*_{\bs{\eta}} \leq (1-c)\frac{\lambda_{\min}(\mb{Q}_{\bs{\eta}})}{\lambda_{\max}(\mb{A}_{\mathrm{cl}}^\top \mb{P}_{\bs{\eta}}\mb{A}_{\mathrm{cl}})},
\end{equation}
implies that for all $h\in(0,h^*_{\bs{\eta}}]$ and $\bs{\xi} \in \bs{\Phi}(\mathcal{X})$, the input $\mb{k}(\bs{\xi})$ satisfies \eqref{eqn:decrease-upper-bound}. 
\end{proof}
We call the function $V_{\bs{\eta}}$ a discrete time Control Lyapunov Function (CLF) for any Euler approximate model of the output dynamics with $h\in(0,h^*_{\bs{\eta}}]$. For each $h \in (0, h_{\bs{\eta}}^*)$, define the set-valued function $\mathcal{U}_h: \bs{\Phi}(\mathcal{X}) \to \mathcal{P}(\R^m)$ as:
\begin{align}
    \mathcal{U}_h(\bs{\xi}) = \left\{ \mb{u}\in\R^m ~|~ (\bs{\xi}, \mb{u}) \in \mathcal{Z}_{\bs{\xi}}; ~\mb{u} ~\mathrm{satisfies}~\eqref{eqn:decrease-upper-bound} ~\mathrm{for}~ h, \bs{\xi} \right\},\nonumber
\end{align}
for all $\bs{\xi} \in \bs{\Phi}(\mathcal{X})$. The next result connects these functions to continuous time stability of the zero-dynamics, implying the conditions of Lemma \ref{lem:exactpracstab} are met for a wider class of controllers than feedback linearizing controllers:

% \begin{align}
%     \mathcal{U}_h(\bs{\xi}) = \{& \mb{u}\in\R^m ~|~ (\bs{\xi}, \mb{u}) \in \mathcal{Z}_{\bs{\xi}} \\ & V_{\bs{\eta}}(\mb{F}_h^{\textrm{Euler},\bs{\eta}}(\bs{\xi},\mb{u})) - V_{\bs{\eta}}(\bs{\eta}) \leq -ch\lambda_{\min}(\mb{Q}_{\bs{\eta}})\Vert\bs{\eta}\Vert_2^2 \}.
% \end{align}

\begin{theorem}
\label{thm:stability}
Let $V_{\bs{\eta}}$ and $h^*_{\bs{\eta}}$ be defined as in Lemma \ref{lem:outputstab}, and assume that $\mb{q}$ is continuously differentiable and the zero-dynamics system governed by the differential equation:
\begin{equation}
\label{eqn:zerodynamics}
    \dot{\mb{z}} = \mb{q}(\mb{0}_\gamma,\mb{z}),
\end{equation}
for zero-coordinate signal $\mb{z}$ is locally exponentially stable to the origin.
Let $\{\mb{k}_h:\bs{\Phi}(\mathcal{X})\to\R^m ~|~ h\in I \}$ be a family of admissible controllers satisfying $\mb{k}_h(\bs{\xi})\in\mathcal{U}_h(\bs{\xi})$ for all $h\in (0,h^*_{\bs{\eta}}]$ and $\bs{\xi}\in\bs{\Phi}(\mathcal{X})$. Then the family $\{(\mb{k}_h,\mb{F}_h^{a,\bs{\xi}})~|~h\in I\}$ is asymptotically stable by equi-Lipschitz Lyapunov functions.
\end{theorem}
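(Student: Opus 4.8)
The plan is to certify Definition~\ref{def:as-by-equi-lip-lyap} by building a composite Lyapunov function for the Euler approximation family, viewing the $\mb{z}$-update in $\mb{F}_h^{a,\bs{\xi}}$ as a subsystem driven by the output $\bs{\eta}$ and using the decrease of $V_{\bs{\eta}}$ from Lemma~\ref{lem:outputstab} to dominate the coupling. First I would apply a standard converse Lyapunov theorem to the locally exponentially stable zero-dynamics \eqref{eqn:zerodynamics} (using that $\mb{q}$ is $C^1$): there exist a ball $B$ about $\mb{0}_{n-\gamma}$, a $C^1$ function $V_{\mb{z}}$, and constants $c_1,\dots,c_4\in\R_{++}$ such that $c_1\|\mb{z}\|^2 \le V_{\mb{z}}(\mb{z}) \le c_2\|\mb{z}\|^2$, $\nabla V_{\mb{z}}(\mb{z})^\top\mb{q}(\mb{0}_\gamma,\mb{z}) \le -c_3\|\mb{z}\|^2$, $\|\nabla V_{\mb{z}}(\mb{z})\| \le c_4\|\mb{z}\|$ on $B$, with $\nabla V_{\mb{z}}$ locally Lipschitz. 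A routine cutoff/patching argument then yields a globally defined extension (still written $V_{\mb{z}}$) that agrees with the local function on a smaller ball $B_\rho$ and retains the quadratic sandwich on all of $\R^{n-\gamma}$, which is what the global bound \eqref{eqn:xibounds} demands.

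Next, on a compact neighborhood $\mathcal{N}_0$ of the origin I would use $C^1$-ness of $\mb{q}$ and $\mb{q}(\mb{0}_\gamma,\mb{0}_{n-\gamma})=\mb{0}_{n-\gamma}$ to write $\mb{q}(\bs{\eta},\mb{z}) = \mb{q}(\mb{0}_\gamma,\mb{z}) + \mb{R}(\bs{\eta},\mb{z})\bs{\eta}$ with $\mb{R}(\bs{\eta},\mb{z}) = \int_0^1 D_{\bs{\eta}}\mb{q}(s\bs{\eta},\mb{z})\,\mathrm{d}s$ continuous (hence $\|\mb{R}\| \le C_R$) and $\|\mb{q}(\bs{\eta},\mb{z})\| \le L_q(\|\bs{\eta}\|+\|\mb{z}\|)$. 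Combining the fundamental-theorem remainder form
\[
 V_{\mb{z}}(\mb{z}+h\mb{q}) - V_{\mb{z}}(\mb{z}) = h\nabla V_{\mb{z}}(\mb{z})^\top\mb{q} + h\!\int_0^1\![\nabla V_{\mb{z}}(\mb{z}+sh\mb{q})-\nabla V_{\mb{z}}(\mb{z})]^\top\mb{q}\,\mathrm{d}s
\]
with the Lipschitz bound on $\nabla V_{\mb{z}}$ gives, on $\mathcal{N}_0$ and for $h$ small enough that $\mb{z}$ and $\mb{z}+sh\mb{q}$ stay in $B_\rho$,
\[
 V_{\mb{z}}(\mb{F}_h^{a,\mb{z}}(\bs{\xi})) - V_{\mb{z}}(\mb{z}) \le -hc_3\|\mb{z}\|^2 + hc_4 C_R\|\mb{z}\|\,\|\bs{\eta}\| + Ch^2(\|\bs{\eta}\|^2+\|\mb{z}\|^2),
\]
for an appropriate constant $C$. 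Since $\mb{k}_h(\bs{\xi})\in\mathcal{U}_h(\bs{\xi})$, \eqref{eqn:decrease-upper-bound} bounds the $V_{\bs{\eta}}$-change along $\mb{F}_h^{a,\bs{\eta}}$, and the $\mb{z}$-update does not depend on the input; so with $V_h \triangleq V_{\bs{\eta}} + \mu V_{\mb{z}}$ I would expand $V_h(\mb{F}_h^{a,\bs{\xi}}(\bs{\xi},\mb{k}_h(\bs{\xi}))) - V_h(\bs{\xi})$, split the cross term by Young's inequality ($c_4C_R\|\mb{z}\|\|\bs{\eta}\| \le \tfrac{c_3}{2}\|\mb{z}\|^2 + \tfrac{c_4^2C_R^2}{2c_3}\|\bs{\eta}\|^2$), choose $\mu$ small enough that the $\|\bs{\eta}\|^2$ coefficient stays at least $\tfrac12 c\,\lambda_{\min}(\mb{Q}_{\bs{\eta}})$, and then choose $h^{**}\le h^*_{\bs{\eta}}$ small enough that the remaining $O(h^2)$ terms are absorbed. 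This yields $V_h(\mb{F}_h^{a,\bs{\xi}}(\bs{\xi},\mb{k}_h(\bs{\xi}))) - V_h(\bs{\xi}) \le -h\alpha_3(\|\bs{\xi}\|)$ with $\alpha_3$ quadratic, on an open set $N$ taken to be a small ball about the origin contained in $\bs{\Phi}(\mathcal{X})\cap\mathcal{N}_0$ — establishing \eqref{eqn:decrease-bound}.

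The remaining conditions are immediate: \eqref{eqn:xibounds} holds with $\alpha_1(r)=\min\{\lambda_{\min}(\mb{P}_{\bs{\eta}}),\mu c_1\}\,r^2$ and $\alpha_2(r)=\max\{\lambda_{\max}(\mb{P}_{\bs{\eta}}),\mu c_2\}\,r^2$, both in $\mathcal{K}_\infty$, using $\|\bs{\xi}\|^2=\|\bs{\eta}\|^2+\|\mb{z}\|^2$ and the global quadratic sandwich for $V_{\mb{z}}$; and since $V_h$ is one and the same function for every $h$, while $V_{\bs{\eta}}$ is quadratic and $V_{\mb{z}}$ is $C^1$, $V_h$ is Lipschitz on every compact $K\subseteq N$ with an $h$-independent constant, giving \eqref{eqn:lipschitzV}. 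Taking $h^*=h^{**}$ finishes the argument. The main obstacle I anticipate is the interconnection bookkeeping — the choices must be made in the right order (fix $\mu$ from the zero-dynamics constants $c_1,\dots,c_4,C_R$, then shrink $N$ and $h^*$ relative to $\mu$ and the chosen compact neighborhood) so that the $\bs{\eta}$–$\mb{z}$ coupling and the quadratic Euler-discretization error are simultaneously dominated; a secondary technical point is producing the globally defined extension of the \emph{local} converse Lyapunov function required by \eqref{eqn:xibounds} without disturbing the local decrease estimate.
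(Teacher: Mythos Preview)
Your proposal is correct and follows the same composite-Lyapunov strategy as the paper: combine $V_{\bs{\eta}}$ with a Lyapunov function for the zero-dynamics, treat the $\mb{z}$-update as the autonomous zero-dynamics perturbed by $\bs{\eta}$, and choose the composite weight so the cross term is dominated. The differences are purely tactical. The paper weights $V_{\bs{\eta}}$ by $\sigma$ and takes $\sigma$ large, while you weight $V_{\mb{z}}$ by $\mu$ and take $\mu$ small; these are dual. The paper packages the $(\|\bs{\eta}\|,\|\mb{z}\|)$ quadratic form into a $2\times2$ matrix $\bs{\Omega}_\sigma(h)$ and argues positive definiteness, whereas you split the cross term with Young's inequality; again equivalent. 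The one substantive simplification in the paper is the choice of $V_{\mb{z}}$: rather than invoking a general converse theorem and then patching to get a global quadratic sandwich, the paper takes $V_{\mb{z}}(\mb{z})=\mb{z}^\top\mb{P}_{\mb{z}}\mb{z}$ with $\mb{P}_{\mb{z}}$ solving the continuous-time Lyapunov equation for the linearization of $\mb{q}(\mb{0}_\gamma,\cdot)$ at the origin. This is globally defined and globally quadratic from the start, so \eqref{eqn:xibounds} and \eqref{eqn:lipschitzV} are immediate and your ``secondary technical point'' about extending the local converse Lyapunov function simply disappears; the local nature of the argument is then confined entirely to the decrease estimate, where the linearization dominates only on a neighborhood. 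Your route works but carries that extra extension step for no gain.
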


\begin{proof}
The local exponential stability of \eqref{eqn:zerodynamics} implies that for any $\mb{Q}_{\mb{z}}\in\mathbb{S}^n_{++}$ and $d \in (0, 1)$, there exist an open neighborhood of the origin $N \subseteq \R^{n - \gamma}$, an $h_{\mb{z}}^* \in I$, a $\mb{P}_{\mb{z}}\in\mathbb{S}^n_{++}$, and a quadratic Lyapunov function $V_{\mb{z}}: \R^{n - \gamma} \to \R_+$ defined as $V_{\mb{z}}(\mb{z}) = \mb{z}^\top\mb{P}_{\mb{z}}\mb{z}$ for all $\mb{z} \in \R^{n - \gamma}$ and satisfying:
\begin{align}
\label{eqn:zbounds}
    & \lambda_{\min}(\mb{P}_{\mb{z}})\Vert\mb{z}_0\Vert^2_2 \leq V_{\mb{z}}(\mb{z}_0)  \leq  \lambda_{\max}(\mb{P}_{\mb{z}})\Vert\mb{z}_0\Vert^2_2, \\ & V_{\mb{z}}(\mb{F}_h^{a,\mb{z}}((\mb{0}_\gamma,\mb{z})))-V_{\mb{z}}(\mb{z}) \leq -hd\lambda_{\min}(\mb{Q}_{\mb{z}})\Vert\mb{z}\Vert_2^2,
\end{align}
for all $\mb{z}_0 \in \R^{n - \gamma}$, $\mb{z} \in N$, and $h \in (0, h_{\mb{z}}^*)$. Construction of $V_{\mb{z}}$ follows the steps of Lemma \ref{lem:outputstab} with the linearization of $\mb{q}$ at the origin. Let $\sigma \in \R_{++}$ be a coefficient to be specified later. Define the composite Lyapunov function $V:\R^n\to\R_+$ as:
\begin{equation}
    V(\bs{\xi}) = \sigma V_{\bs{\eta}}(\bs{\eta}) + V_{\mb{z}}(\mb{z}),
\end{equation}
for all $\bs{\xi} = (\bs{\eta}, \mb{z}) \in \R^n$. First, note that:
\begin{align}
    \min{\{ \sigma\lambda_{\min}(\mb{P}_{\bs{\eta}}), \lambda_{\min}(\mb{P}_\mb{z}) \}} \Vert \bs{\xi} \Vert_2^2  \leq V(\bs{\xi})\nonumber\\
    \leq \underbrace{\max{\{ \sigma\lambda_{\max}(\mb{P}_{\bs{\eta}}), \lambda_{\max}(\mb{P}_\mb{z}) \}}}_{\triangleq \mu} \Vert \bs{\xi} \Vert_2^2 ,
\end{align}
for all $\bs{\xi} \in \R^n$. Second, note that:
\begin{align}
    \Vert \nabla V(\bs{\xi}) \Vert_2 &\leq 2\left( \sigma \lambda_{\max}(\mb{P}_{\bs{\eta}}) \Vert \bs{\eta} \Vert_2 + \lambda_{\max}(\mb{P}_{\mb{z}}) \Vert \mb{z} \Vert_2 \right) \nonumber \\
    &\leq 2(\mu \Vert \bs{\xi} \Vert_2 + \mu \Vert \bs{\xi} \Vert_2) = 4 \mu \Vert \bs{\xi} \Vert_2, \nonumber
\end{align}
for all $\bs{\xi} = (\bs{\eta}, \mb{z}) \in \bs{\Phi}(\mathcal{X})$, implying that for any compact set $K\subset \bs{\Phi}(\mathcal{X})$, we have:
\begin{equation}
    \vert V(\bs{\xi}_1)-V(\bs{\xi}_2)\vert \leq 4 \mu \left(\max_{\bs{\xi}\in K} \Vert\bs{\xi}\Vert_2\right) \Vert \bs{\xi}_1 - \bs{\xi}_2 \Vert_2,
\end{equation}
for all $\bs{\xi}_1, \bs{\xi}_2 \in K$. Third, define a bounded open set $N_{\bs{\xi}} \subset \R^n$ with closure $\mathrm{cl}(N_{\bs{\xi}}) \subset \bs{\Phi}(\mathcal{X}) \cap (\R^\gamma \times N)$, let $L_{\mb{q}} \in \R_{++}$ be a global Lipschitz constant of $\mb{q}$ on $N_{\bs{\xi}}$, and let $h_1^* = \min{\{ h_{\bs{\eta}}^*, h_{\mb{z}}^* \}}$. For all $\bs{\xi} = (\bs{\eta}, \mb{z}) \in N_{\bs{\xi}}$ and $h \in (0, h_1^*)$, note:
\begin{align}
    & V(\mb{F}^{a,\bs{\xi}}_h(\bs{\xi},\mb{k}_h(\bs{\xi}))) - V(\bs{\xi})\nonumber\\
    &~= \sigma(V_{\bs{\eta}}(\mb{F}^{a,\bs{\eta}}_h(\bs{\xi},\mb{k}_h(\bs{\xi}))) - V_{\bs{\eta}}(\bs{\eta})) + V_{\mb{z}}(\mb{F}^{a,\mb{z}}_h(\bs{\xi})) - V_{\mb{z}}(\mb{z}) \nonumber\\
    &~ \leq -\sigma hc\lambda_{\min}(\mb{Q}_{\bs{\eta}})\Vert\bs{\eta}\Vert_2^2 + V_{\mb{z}}(\mb{F}^{a,\mb{z}}_h((\mb{0}_\gamma, \mb{z}))) - V_{\mb{z}}(\mb{z}) \nonumber \\
    &\quad + V_{\mb{z}}(\mb{F}^{a,\mb{z}}_h((\bs{\eta}, \mb{z}))) - V_{\mb{z}}(\mb{F}^{a,\mb{z}}_h((\mb{0}_\gamma, \mb{z}))) \nonumber \\
    &~\leq -\sigma hc\lambda_{\min}(\mb{Q}_{\bs{\eta}})\Vert\bs{\eta}\Vert_2^2 - hd\lambda_{\min}(\mb{Q}_{\mb{z}})\Vert\mb{z}\Vert_2^2 \nonumber\\ 
    &\quad + 2h\mb{z}^\top\mb{P}_{\mb{z}}(\mb{q}((\bs{\eta},\mb{z}))-\mb{q}((\mb{0}_{\gamma},\mb{z}))) \nonumber \\
    &\quad + h^2(\mb{q}((\bs{\eta},\mb{z}))^\top\mb{P}_{\mb{z}}\mb{q}((\bs{\eta},\mb{z})) - \mb{q}((\mb{0}_\gamma,\mb{z}))^\top\mb{P}_{\mb{z}}\mb{q}((\mb{0}_\gamma,\mb{z}))) \nonumber \\
    &~ \leq  -\sigma hc\lambda_{\min}(\mb{Q}_{\bs{\eta}})\Vert\bs{\eta}\Vert_2^2 - hd\lambda_{\min}(\mb{Q}_{\mb{z}})\Vert\mb{z}\Vert_2^2 \nonumber \\
    &\quad + 2h\lambda_{\max}(\mb{P}_{\mb{z}}) L_{\mb{q}} \Vert\bs{\eta}\Vert_2\Vert \mb{z} \Vert_2 + h^2 \lambda_{\max}(\mb{P}_{\mb{z}}) L_{\mb{q}} \Vert \bs{\xi} \Vert_2^2\nonumber\\
    % &~= -h\begin{bmatrix} \Vert \bs{\eta} \Vert_2 \\ \Vert \mb{z} \Vert_2 \end{bmatrix}^\top \begin{bmatrix} \sigma c\lambda_{\min}(\mb{Q}_{\bs{\eta}}) - h\lambda_{\max}(\mb{P}_{\mb{z}}) L_{\mb{q}} & -\lambda_{\max}(\mb{P}_{\mb{z}}) L_{\mb{q}}\\ -\lambda_{\max}(\mb{P}_{\mb{z}}) L_{\mb{q}} & d\lambda_{\min}(\mb{Q}_{\mb{z}}) - h\lambda_{\max}(\mb{P}_{\mb{z}}) L_{\mb{q}} \end{bmatrix} \begin{bmatrix} \Vert \bs{\eta} \Vert_2 \\ \Vert \mb{z} \Vert_2 \end{bmatrix}
    &~= -h\begin{bmatrix} \Vert \bs{\eta} \Vert_2 \\ \Vert \mb{z} \Vert_2 \end{bmatrix}^\top \underbrace{\begin{bmatrix} \omega_{\bs{\eta}}(\sigma,h) & -\omega_{\times} \\ -\omega_{\times} & \omega_{\mb{z}}(h) \end{bmatrix}}_{\triangleq \bs{\Omega}_\sigma(h)} \begin{bmatrix} \Vert \bs{\eta} \Vert_2 \\ \Vert \mb{z} \Vert_2 \end{bmatrix},
\end{align}
where $\omega_{\bs{\eta}}(\sigma,h) = \sigma c\lambda_{\min}(\mb{Q}_{\bs{\eta}}) - h\lambda_{\max}(\mb{P}_{\mb{z}}) L_{\mb{q}} $, $\omega_{\times} = \lambda_{\max}(\mb{P}_{\mb{z}}) L_{\mb{q}}$, and $\omega_{\mb{z}}(h) = d\lambda_{\min}(\mb{Q}_{\mb{z}}) - h\lambda_{\max}(\mb{P}_{\mb{z}}) L_{\mb{q}}$. Pick $h_2^* \in (0, h_1^*]$ such that $h_2^* < d\lambda_{\min}(\mb{Q}_{\mb{z}}) / \omega_{\times}$ and fix $\sigma$ with:
\begin{equation}
    \sigma > \frac{\omega_\times^2 / \omega_{\mb{z}}(h_2^*) + h_2^* \lambda_{\max}(\mb{P}_{\mb{z}}) L_{\mb{q}}}{c\lambda_{\min}(\mb{Q}_{\bs{\eta}})}, \nonumber
\end{equation}
to ensure that $\bs{\Omega}_{\sigma}(h) \in \bb{S}^n_{++}$ for all $h \in [0, h_2^*]$. The composition $\lambda_{\min} \circ \bs{\Omega}_\sigma$ is continuous and $\R_{++}$-valued for all $h \in [0, h_2^*]$ as $\bs{\Omega}_\sigma$ is an affine function. Therefore:
\begin{align}
    &V(\mb{F}^{a,\bs{\xi}}_h(\bs{\xi},\mb{k}_h(\bs{\xi}))) - V(\bs{\xi}) \leq -h \lambda_{\min}(\bs{\Omega}_\sigma(h)) \Vert \bs{\xi}\Vert_2^2\nonumber\\
    &\quad\leq -h \left( \min_{h' \in [0, h_2^*] } \lambda_{\min}(\bs{\Omega}_\sigma(h')) \right)  \Vert \bs{\xi} \Vert_2^2,
\end{align}
for all $\bs{\xi} \in N_{\bs{\xi}}$ and $h \in (0, h_2^*]$.
\end{proof}
This result implies the practical stability of an exact family built using feedback linearizing controllers. Specifically, with the controller $\mb{k}$ used in Lemma \ref{lem:outputstab}, if there exists an $h_0 \in I$ such that $\mb{k}$ is $h_0$-admissible, then the exact family $\{ (\mb{k}, \mb{F}_h^{e, \bs{\xi}}) ~|~ h \in I \}$ is $(\beta, N)$-practically stable for some $\beta \in \mathcal{KL}_\infty$ and open set $N \subseteq \bs{\Phi}(\mathcal{X})$ containing the origin.

\section{Optimization-Based Sampled-Data Control}
\label{sec:sim} 
As motivated in \cite[p. 6]{freeman2008robust}, the performance of a feedback linearizing controller can be improved upon by optimizing control inputs subject to stability constraints imposed via the CLF $V_{\bs{\eta}}$ found in Lemma \ref{lem:outputstab}. We note that the existence of the feedback linearizing controller ensures the function $V_{\bs{\eta}}$ is also a CLF for the continuous time output dynamics in \eqref{eqn:outputdyn}. For sample period $h \in I$, continuous time design yields a controller $\mb{k}_h^{\mathrm{qp}}: \bs{\Phi}(\mathcal{X}) \to \R^m$ specified by the following quadratic program (QP):
\begin{align}
\label{eqn:CLF-QP}
\tag{CLF-QP}
  \mb{k}_h^{\textrm{qp}}(\bs{\xi}) =   \argmin_{\mb{u} \in \R^m} &~ \| \mb{u} \|_2^2\nonumber\\
    \mathrm{s.t.}~ \nabla V_{\bs{\eta}}&(\bs{\eta})^\top (\mb{f}_{\bs{\eta}}(\bs{\xi})+\mb{g}_{\bs{\eta}}(\bs{\xi})\mb{u}) \leq -\lambda_{\min}(\mb{Q}_{\bs{\eta}})\Vert \bs{\eta} \Vert_2^2 \nonumber,
\end{align}
for all $\bs{\xi} = (\bs{\eta}, \mb{z}) \in \bs{\Phi}(\mathcal{X})$. This controller often displays degradation in performance with sample frequency limitations, motivating the specification of a sampled-data controller. For $h\in(0,h^*_{\bs{\eta}}]$, using the Euler approximate model $\mb{F}_h^{a,\bs{\eta}}$, consider a controller $\mb{k}_h^{\mathrm{qcqp}}: \bs{\Phi}(\mathcal{X}) \to \R^m$ specified by the following quadratically constrained quadratic program (QCQP):
\begin{align}
\label{eqn:CLF-QCQP}
\tag{CLF-QCQP}
    &\mb{k}_h^{\textrm{qcqp}}(\bs{\xi}) = \argmin_{\mb{u} \in \R^m} \| \mb{u} \|_2^2\nonumber\\
    &\quad\mathrm{s.t.}~ V_{\bs{\eta}}(\mb{F}_h^{a,\bs{\eta}}(\bs{\xi}, \mb{u})) - V_{\bs{\eta}}(\bs{\eta}) \leq -hc\lambda_{\min}(\mb{Q}_{\bs{\eta}}) \Vert \bs{\eta} \Vert_2^2 \nonumber \\
    &\qquad\quad~ = \argmin_{\mb{u} \in \R^m} \| \mb{u} \|_2^2\nonumber\\
    &\quad\mathrm{s.t.}~ \mb{u}^\top \bs{\Lambda}_h(\bs{\xi}) \mb{u} + 2\bs{\lambda}_h(\bs{\xi})^\top \mb{u} + l_h(\bs{\xi}) \leq 0, \nonumber
\end{align}
for all $\bs{\xi} = (\bs{\eta}, \mb{z}) \in \bs{\Phi}(\mathcal{X})$ where $\bs{\Lambda}_h: \bs{\Phi}(\mathcal{X}) \to \bb{S}^m_{+}$, $\bs{\lambda}_h: \bs{\Phi}(\mathcal{X}) \to \R^m$, and $l_h: \bs{\Phi}(\mathcal{X}) \to \R$ are defined with $\mb{P}_{\bs{\eta}}$, $\mb{Q}_{\bs{\eta}}$, and $c$ from Lemma \ref{lem:outputstab} as:
\begin{align}
    \bs{\Lambda}_h(\bs{\xi}) &= h\mb{g}_{\bs{\eta}}(\bs{\xi})^\top\mb{P}_{\bs{\eta}}\mb{g}_{\bs{\eta}}(\bs{\xi}),\\
    \bs{\lambda}_h(\bs{\xi}) &= \mb{g}_{\bs{\eta}}(\bs{\xi})^\top \mb{P}_{\bs{\eta}}(\bs{\eta} + h\mb{f}_{\bs{\eta}}(\bs{\xi})),\\
    l_h(\bs{\xi}) &= \mb{f}_{\bs{\eta}}(\bs{\xi})^\top \mb{P}_{\bs{\eta}}(2\bs{\eta} + h\mb{f}_{\bs{\eta}}(\bs{\xi})) + c\lambda_{\min}(\mb{Q}_{\bs{\eta}}) \|\bs{\eta}\|_2^2,
\end{align}
for all normal states $\bs{\xi} = (\bs{\eta}, \mb{z}) \in \bs{\Phi}(\mathcal{X})$. Note that for any normal state $\bs{\xi} \in \bs{\Phi}(\mathcal{X})$, the input $\mb{k}(\bs{\xi})$ is in the feasible set of the corresponding optimization problem, and as the feasible set is closed and the $\Vert \mb{k}(\bs{\xi}) \Vert_2^2$-sublevel set of the continuous objective function is compact, there exists a minimizer in this sublevel set. Moreover, since the objective function is strictly convex and the feasible set is convex, this minimizer is unique.

For each $h \in (h_{\bs{\eta}}^*, T_{\max}]$, define $\mb{k}_h^{\mathrm{qcqp}}: \bs{\Phi}(\mathcal{X}) \to \R^m$ arbitrarily. If $\{ \mb{k}_h^{\mathrm{qcqp}} ~|~ h \in I \}$ is a family of admissible controllers, then the exact family $\{ (\mb{k}_h^{\mathrm{qcqp}}, \mb{F}_h^{e, \bs{\xi}}) ~|~ h \in I \}$ is $(\beta, N)$-practically stable for some $\beta \in \mathcal{KL}_\infty$ and open set $N \subseteq \bs{\Phi}(\mathcal{X})$ by Theorem \ref{thm:stability}. This follows as the feasibility of the feedback linearizing control input implies the family $\{ (\mb{k}_h^{\mathrm{qcqp}}, \mb{F}_h^{a, \bs{\xi}}) ~|~ h \in I \}$ is asymptotically stable by the same Lyapunov functions as the family $\{ (\mb{k}, \mb{F}_h^{a, \bs{\xi}}) ~|~ h \in I \}$. 

To illustrate the advantage of sampled-data design, consider the following system with exponentially stable zero-dynamics:
\begin{equation}
\label{eqn:simsys}
    \dot{\eta}_1 = \eta_2, \quad \dot{\eta}_2 = 10\sin(\eta_1) + u, \quad \dot{z} = \eta_1^2 -z, 
\end{equation}
where $(\eta_1, \eta_2)$ denote the output signal, $z$ denotes the zero-coordinate signal, and $u$ denotes the control signal. For $\mb{K}=\begin{bmatrix}1/2 & \sqrt{3}/2\end{bmatrix}$, $\mb{Q}_{\bs{\eta}} = \mb{I}_2$, $c=0.5$, $h = 0.2$, and initial condition $(1, 0, 1)$, the \eqref{eqn:CLF-QP} fails to stabilize the system, while the \eqref{eqn:CLF-QCQP} achieves stability (see Figure \ref{fig:sim}).

\begin{figure}
    \centering
    \includegraphics[width=0.475\textwidth]{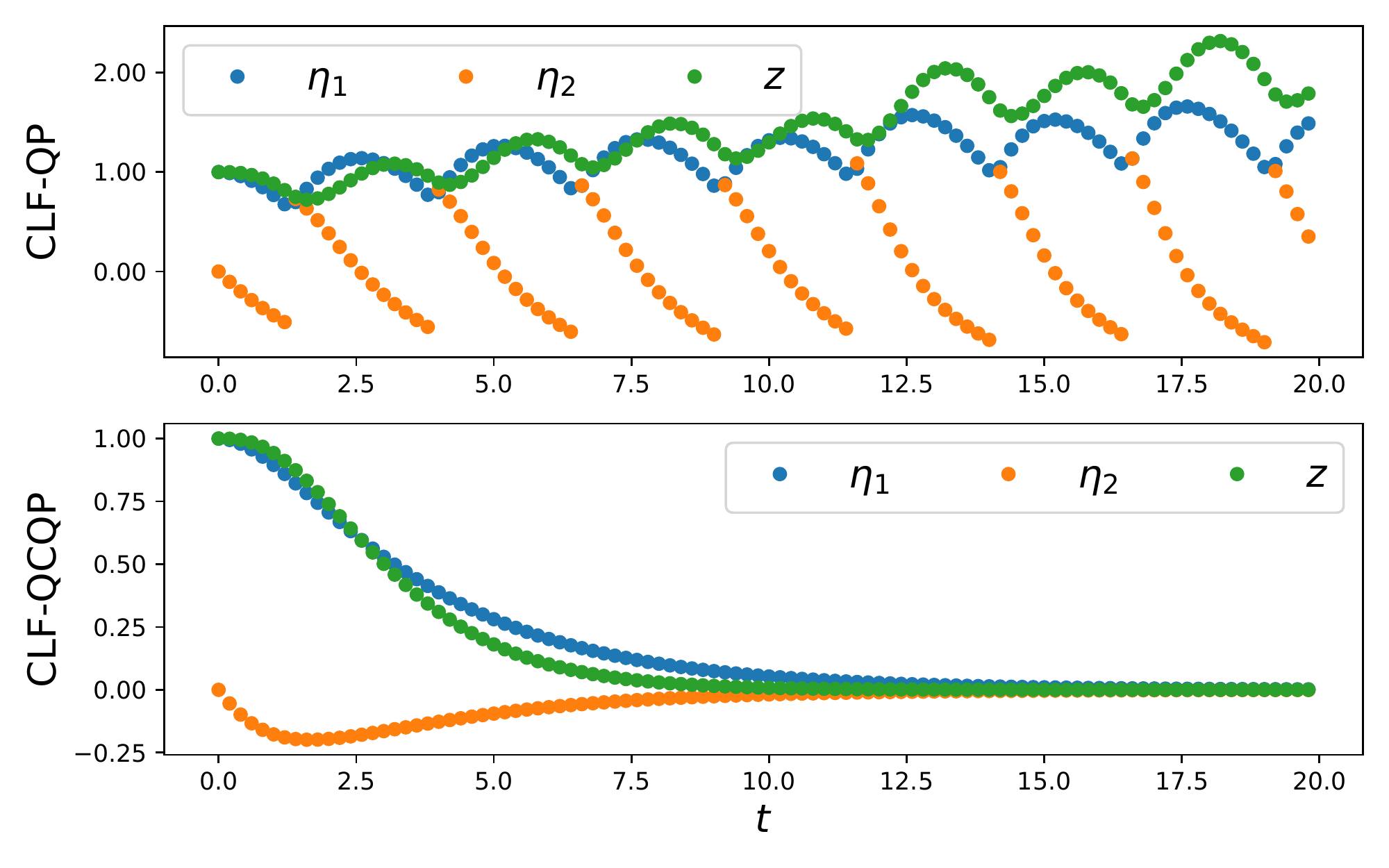}
    \caption{When inputs are applied with a zero-order hold, the controller \eqref{eqn:CLF-QP} designed with continuous time models does not stabilize the system \eqref{eqn:simsys} (Top), while the controller \eqref{eqn:CLF-QCQP} does (Bottom). Simulation code listed at \href{https://bit.ly/CLF-QCQP}{https://bit.ly/CLF-QCQP}.} 
    \label{fig:sim}
\end{figure}

\section{Conclusions}
\label{sec:conclusions}
We presented an approach for sampled-data control synthesis utilizing feedback linearization and CLFs.  In particular, we used the feedback linearizability of a system's continuous time model to yield a discrete time CLF for the Euler approximate discrete time model. We specify a controller with this CLF via a convex optimization problem that improves performance over a continuous time counterpart. Future work will extend this work to notions of safety and Control Barrier Functions.

\section*{Proofs of Lemmas}
\begin{proof}[Proof of Lemma \ref{lem:locliponestep}]
Consider a compact set $K \subset \bs{\Phi}(\mathcal{X})$ and corresponding $h^*\in I$ and $M \in \R_{++}$, and fix a sample period $h \in (0,h^*)$. By assumption, $\mb{k}_h$ is bounded on $K$, and since $\mb{f}_{\bs{\xi}}$ and $\mb{g}_{\bs{\xi}}$ are continuous, $\mb{f}_{\bs{\xi}}$ and $\mb{g}_{\bs{\xi}}$ are also bounded on $K$, implying there exists a bound $M' \in \R_{++}$ with:
\begin{equation}
    \| \mb{f}_{\bs{\xi}}(\bs{\xi}_2) + \mb{g}_{\bs{\xi}}(\bs{\xi}_2)\mb{k}_h(\bs{\xi}_1) \| \leq M', \nonumber
\end{equation}
for all normal states $\bs{\xi}_1, \bs{\xi}_2 \in K$. As $\mb{f}_{\bs{\xi}}$ and $\mb{g}_{\bs{\xi}}$ are locally Lipschitz continuous over the compact set $K$, it follows that $\mb{f}_{\bs{\xi}}$ and $\mb{g}_{\bs{\xi}}$ are globally Lipschitz continuous over $K$. Therefore:
\begin{align}
    & \| \mb{f}_{\bs{\xi}}(\bs{\xi}_2) + \mb{g}_{\bs{\xi}}(\bs{\xi}_2)\mb{k}_h(\bs{\xi}_1) - ( \mb{f}_{\bs{\xi}}(\bs{\xi}_1) + \mb{g}(\bs{\xi}_1)\mb{k}_h(\bs{\xi}_1)) \|\nonumber \\
    &~ \leq \| \mb{f}_{\bs{\xi}}(\bs{\xi}_2) - \mb{f}_{\bs{\xi}}(\bs{\xi}_1) \| + \| \mb{g}_{\bs{\xi}}(\bs{\xi}_2) - \mb{g}_{\bs{\xi}}(\bs{\xi}_1) \| \| \mb{k}_h(\bs{\xi}_1) \| \nonumber\\
    &~ \leq (L_{\mb{f}_{\bs{\xi}}} + L_{\mb{g}_{\bs{\xi}}} M) \| \bs{\xi}_2 - \bs{\xi}_1 \| = \rho(\| \bs{\xi}_2 - \bs{\xi}_1 \|), \nonumber
\end{align}
for all states $\bs{\xi}_1, \bs{\xi}_2 \in K$, where $L_{\mb{f}_{\bs{\xi}}}, L_{\mb{g}_{\bs{\xi}}} \in \R_{++}$ are Lipschitz constants for $\mb{f}_{\bs{\xi}}$ and $\mb{g}_{\bs{\xi}}$, respectively, and $\rho \in \mathcal{K}_\infty$ satisfies $\rho(r) = (L_{\mb{f}_{\bs{\xi}}} + L_{\mb{g}_{\bs{\xi}}} M)r$ for all $r \in \R_+$. The proof proceeds as the proof of Lemma 1 in \cite{nevsic1999sufficient} by substituting $\mathscr{X} = N(K,\epsilon)\subset\bs{\Phi}(\mathcal{X})$, with proper containment implied for some $\epsilon \in \R_{++}$ as $\bs{\Phi}(\mathcal{X})$ is open, and substituting $T_1^* = \min\{h^*, \epsilon / M' \}$. 
\end{proof}

\begin{proof}[Proof of Lemma \ref{lem:diffeopracstability}]
Let $N'\subseteq \bs{\Phi}^{-1}(N)$ be a bounded open set satisfying $\textrm{cl}(N')\subseteq\mathcal{X}$. As $\textrm{cl}(N')$ is compact and $\bs{\Phi}$ is a homeomorphism between $\mathcal{X}$ and $\bs{\Phi}(\mathcal{X})$, we have $\bs{\Phi}(\textrm{cl}(N')) \subseteq \bs{\Phi}(\mathcal{X})$ is compact and satisfies $\bs{\Phi}(\textrm{cl}(N')) = \textrm{cl}(\bs{\Phi}(N'))$. Fix $\overline{R} \in \R_{++}$ and define the corresponding radii $r, r' \in \R_{++}$ as:
\begin{align}
    r &= \max_{\bs{\xi}_0\in\textrm{cl}(\bs{\Phi}(N'))} \beta(\Vert\bs{\xi}_0\Vert,0)+\overline{R}, & r' &= \max_{\mb{x}_0\in\textrm{cl}(N')} \Vert \mb{x}_0 \Vert, \nonumber
\end{align}
and let $B_r, B_{r'} \subset \R^n$ be closed norm-balls centered at the origin of radius $r$ and $r'$, respectively. By assumption, we have that $\bs{\Phi}^{-1}$ and $\bs{\Phi}$ are globally Lipschitz continuous over, $\bs{\Phi}(\mathcal{X})\cap B_r$ and $\mathcal{X}\cap B_{r'}$ with Lipschitz constants $L_{\bs{\Phi}^{-1}}, L_{\bs{\Phi}} \in \R_{++}$, respectively. Given an arbitrary $R'\in\R_{++}$, pick $R<\min\{\overline{R},R'/L_{\bs{\Phi}^{-1}}\}$. Let $\mb{x}_0\in N'$ and let $\bs{\xi}_0 = \bs{\Phi}(\mb{x}_0)\in \bs{\Phi}(N') \subseteq N$. By the $(\beta,N)$-practical stability of the exact normal family, corresponding to $R$, there exists an $h^*\in I$ such that for any sample period $h\in(0,h^*)$, the recursion $\bs{\xi}_{k + 1} = \mb{F}_h^{e, \bs{\xi}}(\bs{\xi}_k, \mb{k}_h(\bs{\xi}_k))\in\bs{\Phi}(\mathcal{X})$ satisfies:
\begin{equation}
    \Vert \bs{\xi}_k \Vert \leq \beta(\Vert \bs{\xi}_0 \Vert,  kh)+R, \nonumber
\end{equation}
and implies $\bs{\xi}_k \in \bs{\Phi}(\mathcal{X})\cap B_r$ for all $k \in \Z_+$. Letting $\mb{x}_k = \bs{\Phi}^{-1}(\bs{\xi}_k)$ and $\mb{x}_{k + 1} = \mb{F}_h^{e, \mb{x}}(\mb{x}_k, \mb{k}_h(\bs{\Phi}(\mb{x}_k)))$ for all $k \in \Z_+$, note that $\mb{x}_k = \bs{\Phi}^{-1}(\bs{\xi}_k)$. It follows that:
\begin{align}
    &\Vert \mb{x}_k \Vert = \Vert \bs{\Phi}^{-1}(\bs{\xi}_k) - \bs{\Phi}^{-1}(\mb{0}_n) \Vert \leq L_{\bs{\Phi}^{-1}} \Vert \bs{\xi}_k - \mb{0}_n \Vert \nonumber\\
    &~\leq L_{\bs{\Phi}^{-1}} (\beta( \Vert \bs{\xi}_0 \Vert, kh ) + R) \nonumber\\
    &~= L_{\bs{\Phi}^{-1}}(\beta(\Vert \bs{\Phi}(\mb{x}_0) - \bs{\Phi}(\mb{0}_n) \Vert, kh) + R) \nonumber\\
    &~\leq L_{\bs{\Phi}^{-1}} (\beta(L_{\bs{\Phi}}\Vert \mb{x}_0 - \mb{0}_n \Vert, kh) + R) < \beta'(\Vert\mb{x}_0\Vert, kh) + R', \nonumber
\end{align}
for all $k \in \Z_+$, where $\beta'(r, s) = L_{\bs{\Phi}^{-1}}\beta(L_{\bs{\Phi}}r, s)$ for all $r, s \in \R_+$. As $\mb{x}_0 \in N'$ and $R'$ were arbitrary, we have that the exact state family is $(\beta',N')$-practically stable.
\end{proof}

\begin{proof}[Proof of Lemma \ref{lem:exactpracstab}]
Consider the open set $N \subseteq \bs{\Phi}(\mathcal{X})$ and the functions $\alpha_1, \alpha_2 \in \mathcal{K}_\infty$ and $\alpha_3\in\K$ as specified by Definition \ref{def:as-by-equi-lip-lyap}. Let $K \subset N$ be a compact set with $\mb{0}_n \in \mathrm{int}(K)$. By one-step consistency and asymptotic stability by equi-Lipschitz Lyapunov functions, there exist a $\rho \in \mathcal{K}_\infty$, a Lipschitz constant $M \in \R_+$, and an $h_0^* \in I$ such that for all $h \in (0, h_0^*)$, \eqref{eqn:one-step-cons}, \eqref{eqn:xibounds}, \eqref{eqn:decrease-bound}, and \eqref{eqn:lipschitzV} hold for all $\bs{\xi}, \bs{\xi}_1, \dots, \bs{\xi}_4 \in K$. There exists a radius $R \in \R_{++}$ such that the closed norm-ball around the origin of radius $R$ is contained in $K$. We modify the claim of \cite[eq. 37]{nevsic1999sufficient} for the local setting in this work as follows:
\begin{claim}
For any $d, D\in\R_{++}$ with $D \leq \alpha_2^{-1}(\alpha_1(\frac{R}{2}))$ and $d \leq 2 \alpha_2(R)$, there exists an $h^*\in(0, h_0^*)$ such that for every $\bs{\xi} \in \bs{\Phi}(\mathcal{X})$ and $h\in(0,h^*)$, if $\Vert \bs{\xi} \Vert \leq D$ and $\max{\{ V_h(\mb{F}_h^{e,\bs{\xi}}(\bs{\xi},\mb{k}_h(\bs{\xi}))), V_h(\bs{\xi}) \}} \geq d$, then:
\begin{equation}
   V_h(\mb{F}_h^{e,\bs{\xi}}(\bs{\xi},\mb{k}_h(\bs{\xi})))-V_h(\bs{\xi})\leq-\frac{h}{2}\alpha_3(\Vert\bs{\xi}\Vert).
\end{equation}
\end{claim}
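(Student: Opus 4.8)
The plan is to follow the argument behind \cite[eq.~37]{nevsic1999sufficient}, the one adaptation being that in the present local setting one must first check that the relevant points all lie in the compact set $K$ on which the Lipschitz estimate \eqref{eqn:lipschitzV} is available; throughout, $\rho\in\K_\infty$, $M$, $h_0^*$ and $R$ are as fixed in the paragraph preceding the claim, and I write $\mb{F}_h^{a,\bs{\xi}}$ for the Euler map appearing in the hypotheses.

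First I would establish the containments. From $\Vert\bs{\xi}\Vert\le D\le\alpha_2^{-1}(\alpha_1(R/2))$ and the sandwich \eqref{eqn:xibounds}, one gets $V_h(\bs{\xi})\le\alpha_2(\Vert\bs{\xi}\Vert)\le\alpha_1(R/2)$ and hence $\Vert\bs{\xi}\Vert\le R/2$, so $\bs{\xi}\in K\subseteq N$; the decrease inequality \eqref{eqn:decrease-bound} then keeps $V_h$ from growing, so $V_h(\mb{F}_h^{a,\bs{\xi}}(\bs{\xi},\mb{k}_h(\bs{\xi})))\le V_h(\bs{\xi})\le\alpha_1(R/2)$ and the Euler iterate also stays in the ball of radius $R/2$, hence in $K$; and one-step consistency \eqref{eqn:one-step-cons}, once $h$ is small enough that $h\rho(h)\le R/2$, places the exact iterate $\mb{F}_h^{e,\bs{\xi}}(\bs{\xi},\mb{k}_h(\bs{\xi}))$ within distance $R/2$ of the Euler iterate and therefore in the ball of radius $R$, which lies in $K$. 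With all three points in $K$, I would decompose $V_h(\mb{F}_h^{e,\bs{\xi}}(\bs{\xi},\mb{k}_h(\bs{\xi})))-V_h(\bs{\xi})$ into the consistency error plus the approximate decrease, bounding the former by $Mh\rho(h)$ via \eqref{eqn:lipschitzV} and \eqref{eqn:one-step-cons} and the latter by $-h\alpha_3(\Vert\bs{\xi}\Vert)$ via \eqref{eqn:decrease-bound}, obtaining $V_h(\mb{F}_h^{e,\bs{\xi}}(\bs{\xi},\mb{k}_h(\bs{\xi})))-V_h(\bs{\xi})\le Mh\rho(h)-h\alpha_3(\Vert\bs{\xi}\Vert)$.

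It then remains to absorb $Mh\rho(h)$ into half of $h\alpha_3(\Vert\bs{\xi}\Vert)$, which needs a $\bs{\xi}$-independent positive lower bound on $\alpha_3(\Vert\bs{\xi}\Vert)$; this is exactly where the hypothesis $\max\{V_h(\mb{F}_h^{e,\bs{\xi}}(\bs{\xi},\mb{k}_h(\bs{\xi}))),V_h(\bs{\xi})\}\ge d$ is used. If $V_h(\bs{\xi})\ge d$, then \eqref{eqn:xibounds} gives $\Vert\bs{\xi}\Vert\ge\alpha_2^{-1}(d)$ at once; if only $V_h(\mb{F}_h^{e,\bs{\xi}}(\bs{\xi},\mb{k}_h(\bs{\xi})))\ge d$, the increment bound just derived yields $V_h(\bs{\xi})\ge d-Mh\rho(h)\ge d/2$ once $h$ is small enough that $Mh\rho(h)\le d/2$, giving $\Vert\bs{\xi}\Vert\ge\alpha_2^{-1}(d/2)$. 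Either way $\alpha_3(\Vert\bs{\xi}\Vert)\ge\alpha_3(\alpha_2^{-1}(d/2))\triangleq\delta>0$. Using that $\rho\in\K_\infty$ is continuous with $\rho(0)=0$, I would then pick $h^*\in(0,h_0^*)$ small enough that $h\rho(h)\le R/2$, $Mh\rho(h)\le d/2$ and $M\rho(h)\le\delta/2$ hold for all $h\in(0,h^*)$; since $M\rho(h)\le\delta/2\le\tfrac12\alpha_3(\Vert\bs{\xi}\Vert)$, the increment bound becomes $V_h(\mb{F}_h^{e,\bs{\xi}}(\bs{\xi},\mb{k}_h(\bs{\xi})))-V_h(\bs{\xi})\le-\tfrac{h}{2}\alpha_3(\Vert\bs{\xi}\Vert)$, which is the claim. (The standing bound $d\le 2\alpha_2(R)$ plays no role beyond keeping this regime non-vacuous.)

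The step I expect to require the most care is the lower bound on $\Vert\bs{\xi}\Vert$ in the second case above: it is obtained only \emph{after} the increment estimate, so the several smallness conditions on $h^*$ must be imposed in a self-consistent order, and one has to verify that each of them — and the constant $\delta$ — depends only on $d$, $D$, $R$, $M$, $\rho$ and the comparison functions, never on $\bs{\xi}$, so that a single $h^*$ works uniformly over all $\bs{\xi}\in\bs{\Phi}(\mathcal{X})$ with $\Vert\bs{\xi}\Vert\le D$.
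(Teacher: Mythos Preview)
Your proposal is correct and follows essentially the same route as the paper: the paper's own proof is just a pointer to \cite[eq.~37]{nevsic1999sufficient} together with the substitutions $\mathscr{X},\mathscr{X}_1\mapsto K$, $T_1^*,T_2^*\mapsto h_0^*$, and the observation that the constraints on $d,D$ force the relevant bound $\Delta\le R$ so that all iterates stay in $K$. You have simply unpacked that reference --- the containment argument (your first paragraph) is exactly the ``$\Delta\le R$'' point, and the decomposition $V_h(\mb{F}_h^{e,\bs{\xi}})-V_h(\bs{\xi})\le Mh\rho(h)-h\alpha_3(\Vert\bs{\xi}\Vert)$ followed by the case split on which term of the $\max$ realizes $d$ is the standard Ne\v{s}i\'c--Teel computation.
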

In the language of \cite{nevsic1999sufficient}, the restrictions on $d$ and $D$ imply $\Delta \leq R$, and the proof follows by replacing the sets $\mathscr{X}$ and $\mathscr{X}_1$ with $K$, the constant $M$ with the Lipschitz constant $M$ given above, and the constants $T_1^*$ and $T_2^*$ with $h_0^*$. Letting $U\subset K$ be the open ball of radius $\alpha_2^{-1}(\alpha_1(\alpha_2^{-1}(\alpha_1(\frac{R}{2}))))$, the modified claim may be used to prove the existence of a $\beta\in\K\mathcal{L}_\infty$ such that the family $\{(\mb{k}_h,\mb{F}_h^{e,\bs{\xi}}) ~|~ h \in I\}$ is $(\beta, N')$-practically stable for any open set $N'\subseteq U$ containing the origin by following the proof of Theorem 2 in \cite{nevsic1999sufficient}.
\end{proof}
\bibliographystyle{IEEEtran} 
\bibliography{main}

\end{document}